\numberwithin{equation}{section}
\date{}
\title{A quantum anchor for higher Koszul brackets}
\author{Ekaterina Shemyakova and Yagmur Yilmaz}
\address{{University of Toledo, USA}}
\email{Ekaterina.Shemyakova@utoledo.edu}
\thanks{This work was partially supported by Simons Foundation collaboration grant for 
mathematicians number 846970}
\def\co{\colon\thinspace}
\newtheorem{theorem}{Theorem}[section]
\newtheorem{proposition}{Proposition}[section]
\newtheorem{lemma}{Lemma}[section]
\theoremstyle{definition}
\newtheorem{definition}{Definition}[section]
\newtheorem{example}{Example}[section]
\newtheorem{remark}{Remark}[section]
\def\co{\colon\thinspace}
\renewcommand{\leq}{\leqslant}
\renewcommand{\geq}{\geqslant}
\newcommand{\ih}{\frac{i}{\hbar}}
\newcommand{\hp}{{\hat p}}
\DeclareMathOperator{\symb}{\sigma}
\DeclareMathOperator{\Dens}{\mathfrak{Dens}}
\newcommand{\HSch}{H_{\text{Sch}}}
\newcommand{\Hd}{H_{\text{deRham}}}
\newcommand{\HdP}{H_{d_P}}
\renewcommand{\div}{{\mathop{\mathrm{div}}}}
\newcommand{\Dbar}{%
   {{D\mkern-14 mu
   \mathchoice{\raisebox{-2pt}{$\displaystyle \mathchar'26$}}
             {\raisebox{-2pt}{$\mathchar'26$}}
             {\raisebox{-1pt}{$\scriptstyle \mathchar'26$}}
             {\raisebox{-0.5pt}{$\scriptscriptstyle \mathchar'26$}}}
             \mkern 5 mu}}
\DeclareMathOperator{\fun}{\mathit{C^{\infty}}}
\DeclareMathOperator{\Vol}{Vol}
\DeclareMathOperator{\ord}{ord}
\DeclareMathOperator{\sign}{sgn}
\DeclareMathOperator{\MX}{MX}
\renewcommand{\P}{\Pi}
\newcommand{\der}[2]{{\frac{\partial {#1}}{\partial {#2}}}}
\newcommand{\Z}{{\mathbb Z_{2}}}
\newcommand{\ZZ}{{\mathbb Z}}
\newcommand{\p}{\partial}
\newcommand{\e}{\varepsilon}
\renewcommand{\O}{\Omega}
\newcommand{\D}{\Delta}
\renewcommand{\o}{\omega}
\newcommand{\la}{{\lambda}}
\renewcommand{\d}{\delta}
\newcommand{\ft}{{\tilde f}}
\newcommand{\at}{{\tilde a}}
\newcommand{\bt}{{\tilde b}}
\newcommand{\At}{{\tilde A}}
\newcommand{\Bt}{{\tilde B}}
\newcommand{\Lt}{{\tilde L}}
\newcommand{\brh}{{\boldsymbol{\rho}}}
\newcommand{\bof}{{\boldsymbol{f}}}
\newcommand{\bg}{{\boldsymbol{g}}}
\DeclareMathOperator{\Vect}{\mathrm{Vect}}
\DeclareMathOperator{\Mult}{\mathfrak{A}}
\newcommand{\lsch}{{[\![}}
\newcommand{\rsch}{{]\!]}}
\newcommand{\Sinf}{S_{\infty}}
\newcommand{\Pinf}{P_{\infty}}
\newcommand{\Linf}{L_{\infty}}
\newcommand{\void}{\varnothing}
\newcommand{\tto}{{\linethickness{3pt}
		  \,\begin{picture}(1,0)
                   \put(0,0.26){\line(1,0){0.95}}
                   \put(0,0){$\boldsymbol{\rightarrow}$}
                   %\put(0,0){$\hphantom{\boldsymbol{\to}}$}
                  \end{picture}
                  }\,
}
\newcommand{\oto}{{\linethickness{0.5pt}
		  \,\begin{picture}(1,0)
		  \put(0.07,0.175){\line(0,1){0.2}}
		  %put(0.01,0.172){\line(0,1){0.2}}
		  % \put(-0.05,0.175){\line(1,0){0.8}}
		   %\put(-0.05,0.38){\line(1,0){0.8}}
                   %\put(0,0){${\rightarrow}$}
                   \put(-0.01,0){$\boldsymbol{\Rightarrow}$}
                   %\put(0,0){$\hphantom{\boldsymbol{\to}}$}
                  \end{picture}
                  }\,
}
\newcommand{\du}{\text{dual}}
\newcommand{\sta}{\star}
\begin{document}
\maketitle

\begin{abstract}
It is well known that the chain map between the de Rham and 
Poisson complexes on a Poisson manifold also maps the 
Koszul bracket of differential forms into the Schouten 
bracket of multivector fields.

In the generalized case of a $P_\infty$-structure,  where 
a Poisson bivector $P$ is replaced by an arbitrary even
multivector obeying $\lsch P,P\rsch=0$,   an analog  of the chain map and an
$L_\infty$-morphism from the higher Koszul brackets into 
the Schouten bracket are also known; however, they differ 
significantly in nature.

In the present paper, we address the problem of quantizing 
this picture. In particular, we show that the 
$L_\infty$-morphism is quantized into a single linear 
operator, which is a formal Fourier integral 
operator.

This paper employs Voronov's thick 
morphism technique and quantum Mackenzie-Xu 
transformations in the framework of  
$L_\infty$-algebroids.
\end{abstract}

\tableofcontents
%
% \section{Introduction}
%
%
% The paper contains details about the Khudaverdian-Voronov's construction of an $L_\infty$-morphism mapping higher Koszul brackets into Schouten bracket.
%
% It also contain an illustration with explicit formulas on how to use Voronov's thick morphisms as a tool for constructing $L_\infty$-morphism.

\section{Introduction}
Nowadays there is great interest in  quantization  of various geometric structures in 
some precise sense, see e.g. Behrend-Peddie-Xu~\cite{behrend-peddie-xu:2023} and 
Pridham~\cite{pridham:outline2018, pridham:deform2019}. 

For a Lie algebroid $E\to M$, one may see as its quantization, finding a second-order 
odd differential operator (a BV operator) generating the corresponding odd bracket, 
i.e. the Lie-Schouten bracket on functions on $\Pi E^*$. For $L_{\infty}$-algebroids, 
the analog of that is more delicate and requires using $\hbar$-differential 
operators. (We explain it below.)

In our previous work~\cite{shemy:koszul}, we studied  an %``BV type''  
$\hbar$-differential   operator $\Delta_P$ that generates     higher Koszul brackets 
introduced by  Khudaverdian-Voronov in~\cite{tv:higherpoisson}. It can be seen as 
giving a quantization of the $\Linf$-algebroid structure  of the cotangent   algebroid 
of an $\Pinf$-manifold $(M,P)$. (Here $P\in \fun(\Pi T^*M)$ is even and  $\lsch 
P,P\rsch=0$.)

In the present paper, we are concerned with constructing an integral operator that intertwines the operator $\Delta_P$ with the divergence operator $-\hbar^2\delta_{\brh}$ generating the canonical Schouten bracket. It can be interpreted as giving a quantization of the anchor in the cotangent $\Linf$-algebroid.

The paper is organized as follows.  Sec.~\ref{sec:Preliminaries} covers the 
preliminaries. We stress the convenience of 
$L_\infty$- and Lie algebroid setting, in their different manifestations.  All 
objects of interest are elements within this framework. We recall  the 
$Q$-manifolds language, which  is indispensible for working with higher structures. 
We also review thick morphisms and the algebraic theory of $\hbar$-differential 
operators. 
In Sec.~\ref{sec:statement_of_the_problem}, we present the problem.
This includes details (omitted in the original paper~\cite{tv:highkosz})
of constructing an $L_\infty$-morphism
between Higher Koszul and Schouten brackets using thick morphisms. 
In Sec.~\ref{sec:solution}, we present the solution.

\section{Preliminaries}
\label{sec:Preliminaries}
\subsection{Notations}
\label{subsec:Notations}
For an ordinary manifold $M$, smooth functions on $\Pi T^*M$ can be identified 
with inhomogeneous multivector fields.
Here $\Pi$ is the parity reversion functor, which formally reverses the parity of 
the fiber coordinates of $T^*M$ while leaving the transformation functions unchanged. 
Similarly, smooth functions on $\Pi TM$ can be 
identified with inhomogeneous differential forms. In the case of a supermanifold,
some of the fiber coordinates may be even and 
so smooth functions on $\Pi TM$ will be pseudo-differential forms (see e.g. 
Manin~\cite{manin:gaugeeng}), which we will refer to as differential forms for 
brevity.
The same applies to $\Pi 
T^*M$ and multivector fields. We denote:
\begin{equation*}
  \Mult(M) = C^\infty(\Pi T^*M) \quad \text{and} \quad \O(M) = C^\infty(\Pi TM) \, .
\end{equation*}
Given $M$ with local coordinates
 $x^a$, where the parities are denoted as $\widetilde{x^a}=\widetilde{a}$, we use the 
following 
notation for coordinates on  the  tangent and cotangent bundles:  for $T^*M$,   the 
induced coordinates on the fibers $p_a$, $\widetilde{p_a}=\widetilde{a}$; for $\Pi 
TM$,  the induced coordinates on the fibers $dx^a$ (we consider the odd version of  
the de Rham differential $d$) with $\widetilde{dx^a}=\widetilde{a}+1$; and for
$\Pi T^*M$,    the induced coordinates on the fibers $x^*_a$ with 
 $\widetilde{x^*_a}=\widetilde{a}+1$.

\subsection{Odd and even Poisson brackets}
\label{subsec:Poissonbrackets}
\begin{definition}
An \emph{even Poisson algebra} is a commutative 
superalgebra $A$ with an even  bracket satisfying
\begin{align*} 
 \{a,b\}&=-(-1)^{\widetilde{a} \widetilde{b}}\{b,a\} \, ,\\
   \{a,\{b,c\}\} &
    =\{\{a,b\},c\}+(-1)^{\widetilde{a}\widetilde{b}}\{b,\{a,c\}\}  \, , \\
\{a, bc\} &= \{a, b\}c + (-1)^{\widetilde{a} \widetilde{b}} b \{a, c\} \, .
\end{align*} 
\end{definition}
This bracket is linear without signs, meaning $\{ ka,b\}=k\{a,b\}$, 
    $\{a,bk\}=\{a,b\}k$, and 
    $\{ak,b\}=\{a,kb\}$.   

\begin{definition} 
An \emph{odd Poisson algebra in its antisymmetric version} (also known as a Schouten 
or Gerstenhaber algebra, and the bracket is sometimes referred to as an antibracket) 
is a commutative superalgebra 
$A$ equipped with an odd bracket that satisfies the following properties
\begin{align*}
     \{a,b\}&=-(-1)^{(\widetilde{a}+1)( \widetilde{b}+1)}\{b,a\}   \, , \\
       \{a,\{b,c\}\}
        &=\{\{a,b\},c\}+(-1)^{(\widetilde{a}+1)( \widetilde{b}+1)}\{b,\{a,c\}\}  \, , 
\\
    \{a, bc\} &= \{a, b\}c + (-1)^{(\widetilde{a}+1) \widetilde{b}} b \{a, c\} \, .
\end{align*}
\end{definition}

The bracket is linear:
    $\{ ka,b\}=k\{a,b\}$, 
    $\{a,bk\}=\{a,b\}k$, and 
    $\{ak,b\}=(-1)^{\widetilde{k}}\{a,kb\}$.    
So, mnemonically, the sign is placed on the comma.

\begin{definition}
An  \emph{odd Poisson algebra in its symmetric version} is a commutative superalgebra 
$A$ with an odd bracket satisfying 
\begin{align*}
\{a,b\} &=(-1)^{\widetilde{a} \widetilde{b}}\{b,a\} \, \\
\{a,\{b,c\}\}       
&=(-1)^{\widetilde{a}+1}
\{ \{a,b\},c\}+
(-1)^{(\widetilde{a}+1)(\widetilde{b}+1)}\{b,\{a,c\}
\}  \, , \\
   \{a, bc\} &= \{a, b\}c + (-1)^{(\widetilde{a}+1) \widetilde{b}} b \{a, c\} \, .
\end{align*}
\end{definition}
 Here the sign is on the opening bracket. 
 The two versions of odd bracket can be converted into one  another 
  using
   $\{a,b\}_{\text{sym}}=(-1)^{\widetilde{a}}\{a,b\}_{\text{anti}}$ or 
$\{a,b\}_{\text{sym}}=(-1)^{\widetilde{a}+1}\{a,b\}_{\text{anti}}$.
 
\begin{example} 
\label{ex:evenPoisson_of_functions}
An even Poisson bracket on $\fun (M)$ is defined by an even Poisson 
bivector 
\begin{equation}
\label{eq:bivector}
P =\frac{1}{2} 
P^{ab} x_b^* x_a^* \in 
 C^\infty(\Pi  T^*M) \, , \  P^{ab} = (-1)^{(\widetilde{a}+1)(\widetilde{b}+1)} 
P^{ba} \, ,  \
\lsch P,P \rsch=0 \, , \ \widetilde{P}=0 \, .
\end{equation}
Here $\lsch -,- \rsch$ denotes the canonical Schouten bracket which we introduce 
below.
The bracket is defined by:
\begin{equation}\label{eq:Poisson_functions}
    \{ f,g\} = -(-1)^{\widetilde{a}(\widetilde{f}+1)} P^{ab} \frac{\partial f}{ 
\partial x^b}
    \frac{\partial g}{ \partial x^a} \, .
\end{equation}
\end{example}

\begin{example} 
\label{ex:can_Poisson_br_of_Hamiltonians}
The canonical even Poisson bracket on   
$C^\infty(T^*M)$ (such functions are known as Hamiltonians) is an 
even Poisson bracket defined by 
\begin{equation*}
\{f,g\}=0 \, , \  \{H_X,f\}=X(f) \, , \{H_X,H_Y\}=H_{[X,Y]} \, , 
\end{equation*}
where $H_X=X^ap_a$, for $X=X^a\frac{\p}{\p x^a}$. This definition is then extended 
to other Hamiltonians by the Leibniz rule.
  In particular, $\{p_a , x^b\} = \delta_a^b=-(-1)^{\widetilde{a}}(x^b,p_a\}$.  In 
local 
coordinates,
\begin{equation}
  \label{eq:can_Poiss}
      \{ H \, , \, G \} =  (-1)^{\widetilde{a}(\widetilde{H}+1)}  \frac{\partial H}{ 
\partial p_a}
    \frac{\partial G}{ \partial x^a} -(-1)^{\widetilde{a} \widetilde{H}}  
\frac{\partial H}{ \partial x^a}
    \frac{\partial G}{ \partial p_a}  \, .
  \end{equation}
\end{example}

\begin{example} 
\label{ex:Schouten_bracket}
The Schouten bracket or the canonical odd Poisson bracket of 
multivector fields is an odd bracket on $\fun(\Pi T^*M)$ defined as follows. We will 
be using the symmetric version. For
 $f,g \in C^\infty(M)$ and vector fields 
$X, Y$, 
it is defined by 
 \begin{equation}
 \lsch f,g\rsch=0 \, , \  \lsch P_X,f\rsch= X(f)  \, , \ \lsch P_X,P_Y\rsch  = 
(-1)^{\widetilde{X}}P_{[X,Y]} \, ,
 \end{equation}
 and then can be extended by the Leibniz rule to other cases. 
  In particular, $\lsch x^*_a , x^b \rsch = (-1)^{\widetilde{a}}\delta_a^b=\lsch 
x^b,x^*_a\rsch$.
%   =-[ x^b,x^*_a ]$.
 Here 
$P_X=(-1)^{\widetilde{X}}X^ax^*_a$ for $X=X^a\frac{\p}{\p x^a}$. In local 
coordinates, the bracket is given by: 
 \begin{equation}
 \label{eq:can_Schouten}
      \lsch F \, , \, G\rsch  =  (-1)^{\widetilde{a}(\widetilde{F}+1)}  
\left(\frac{\partial F}{ 
\partial x^*_a}
    \frac{\partial G}{ \partial x^a} +(-1)^{\widetilde{F}}  
\frac{\partial F}{ \partial x^a}
    \frac{\partial G}{ \partial x^*_a}\right)  \, .
  \end{equation}
\end{example}
   
\begin{example}
\label{ex:Koszul}
The Koszul bracket is an odd Poisson bracket defined on 
differential forms on a Poisson manifold. For a manifold $M$ with a Poisson bivector 
$P$ as in~\eqref{eq:bivector}, we set (in the symmetric version):  
\begin{equation} \label{eq:Koszul_form2}
   [f,g]_P = 0 \,   , \
 [f,dg]_P =  (-1)^{\widetilde{f}} \{ f,g\}_P \,   ,   \ [ df, dg]_P = - 
(-1)^{\widetilde{f}}   d \{ f,g\}_P \, .
\end{equation}
In particular, we have
\begin{equation}
 [x^a, x^b]_P = 0, \quad [x^a, dx^b]_P = -P^{ab}, \quad [dx^a, dx^b]_P = dP^{ab}.
\end{equation}
This definition extends to other forms by the Leibniz rule.
\end{example}
 
One can show that an even Poisson bracket on $M$ 
(Ex.~\ref{ex:evenPoisson_of_functions}) 
can be obtained from the Schouten bracket of multivector fields
(Ex.~\ref{ex:Schouten_bracket}) 
as a ``derived bracket''. Given an even bivector $P$ as 
in~\eqref{eq:bivector}, we have: 
\begin{equation}
\{f,g\}_P = 
\lsch f,\lsch P,g \rsch \rsch \, .
\end{equation}

Similarly, an odd Poisson bracket on $M$ 
can be obtained as a derived bracket from 
the canonical bracket of Hamiltonians (Ex.~\ref{ex:can_Poisson_br_of_Hamiltonians}).
Given an odd fiberwise quadratic Hamiltonian 
\begin{equation} \label{eq:odd_Hamiltonian_quadratic}
H = \frac{1}{2} H^{ab} p_b p_a \in 
\fun(T^*M) \, , \ 
H^{ab}=(-1)^{\widetilde{a}\widetilde{b}}H^{ba} \, , \ 
\{H,H\}=0 \, , \ \widetilde{H}=1 \, ,
\end{equation}
 we have:
\begin{equation}
 \{f,g\}_H = 
-\{f,\{H,g\}\} \, .
 \end{equation}

The algebraic framework of derived brackets was introduced by 
Kosmann-Schwarzbach~\cite{yvette:derived}.
They were also independently considered by 
Voronov.

 \subsection{Language of $Q$-manifolds}
\label{subsec:Qmfd}
\subsubsection{$Q$-manifolds}
\label{subsubsec:Qmanifolds}

The theory of $Q$-manifolds was initiated by A. Schwarz~\cite{schwarz:semiclassical} (to whom belongs the name), M. Kontsevich~\cite{kontsevich:lect, Kontsevich_deform_quant}, and A. Vaintrob~\cite{vaintrob:darboux, vaintrob:normforms}.   See also~\cite{schwarz:aksz}.

\begin{definition}
A \emph{$Q$-manifold} $(M,Q)$ is a supermanifold equipped with a \emph{homological vector field}, which is an odd vector field $Q$ such that $Q^2=0$.  
\end{definition}

\begin{definition}
A \emph{$Q$-morphism} $(M_1,Q_1)$ to $(M_2,Q_2)$ is a 
 map of supermanifolds $\varphi : M_1 \rightarrow M_2$ such that 
\begin{equation} \label{eq:Q_phi_related}
    Q_1 \circ \varphi^* = \varphi^* \circ Q_2 \, ,
\end{equation}
where $\varphi^*$ is the pullback of functions.
\end{definition}

\subsubsection{$Q$-morphisms and $L_\infty$-morphisms of $L_\infty$-algebras}
\label{subsubsec:Qmanifolds_Linfty}

Let $V$ be a super vector space, $S^k V$ and $\Lambda^k V$ be the $k$th symmetric and exterior powers. An $L_\infty$-algebra is a concept motivated by physics, first mathematically formulated by Lada and Stasheff~\cite{lada_stasheff_L_infinity_algebras}
(in a $\ZZ$-graded antisymmetric version).

There are two versions of $L_\infty$-algebras;
one can be easily transformed into the other by a change of parity. We consider both here, as both  naturally occur in applications.

\begin{definition} 
\label{def:Linfty_antisym}
 The structure of an \emph{$L_\infty$-algebra in antisymmetric version} on a super vector space $L$ is defined by a collection of antisymmetric multilinear maps, the \emph{higher brackets}:
\[
l_k: \Lambda^k L \to L \, ,
\]
for $k \geq 0$, such that 
\begin{enumerate}[1)]
 \item the parity of the $k$th bracket is $k$ modulo $2$;
 \item   a series of \emph{higher Jacobi identities} is satisfied: for all $n$,
 \begin{equation}
     \sum_{k+l=n} \sum_{\sigma \in \text{Sh}(k,l)} \sign \sigma  \varepsilon(\sigma) (-1)^{k l}
     l_n \left( l_k \left( a_{\sigma(1)}, \dots, a_{\sigma(k)} \right) , a_{\sigma(k+1)}, \dots , a_{\sigma(k+l)} \right) = 0 \, ,
 \end{equation}  
  where $\varepsilon(\sigma)$ is the Koszul sign of a permutation $\sigma$, and  $\text{Sh}(k, l)$ is the set of \emph{shuffles}.
\end{enumerate}
\end{definition}
Recall that an $(k,l)$-shuffle is a permutation $\sigma$ of the set $\{1, 2, \dots, n\}$ such that:
\[
\sigma(1) < \sigma(2) < \dots < \sigma(k) \quad \text{and} \quad \sigma(k+1) < \sigma(i+2) < \dots < \sigma(k+l) \, .
\]
This means that the relative order of the first $k$ elements and the last $l$ elements is preserved, but they can interlace in any way.

\begin{definition}
\label{def:Linfty_sym}
 The structure of an \emph{$L_\infty$-algebra in symmetric version}  on a super vector space $V$ is defined by a collection of symmetric multilinear maps, the \emph{higher brackets}:
\[
l_k: S^k V \to V \, ,
\]
for $k \geq 0$, such that 
\begin{enumerate}[1)]
 \item all brackets are odd;
 \item a series of \emph{higher Jacobi identities} is satisfied: for all $n$,
 \begin{equation}
     \sum_{k+l=n} \sum_{\sigma \in \text{Sh}(k,l)} \varepsilon(\sigma)
     l_n \left( l_k \left( a_{\sigma(1)}, \dots, a_{\sigma(k)} \right) , a_{\sigma(k+1)}, \dots , a_{\sigma(k+l)} \right) = 0 \, .
 \end{equation}
\end{enumerate}
\end{definition}

(``Higher Jacobi identities'' in both versions  are a homotopy  generalization of the ordinary Jacobi identity for Lie algebras: first, the ordinary Jacobi for the binary bracket is modified by a term which is   chain homotopic to zero, and then further identities are seen as ``higher homotopies''.)

Now, the two versions are related by the formulas: $V=\Pi L$,
and 
\begin{equation}
 l_k^{\text{Sym}}(\Pi x_1, \dots , \Pi x_k) = (-1)^{\varepsilon} \Pi
 l_k^{\text{antiSym}}(x_1, \dots, x_k) \, .
\end{equation}
Here $\varepsilon=\sum \widetilde{x_i} (k-i)$.

The symmetric version of an $L_\infty$-algebra 
can be described by a single homological vector field 
\begin{equation}
 Q = Q^k(\xi) \der{}{\xi^k} \in \Vect(V) \, ,
\end{equation}
where $\xi^k$ are left linear coordinates on $V$ relative to some basis $\e_1, \dots, \e_n$.
The components of $Q$ can be found by evaluating brackets on the repeated even argument $\xi \in V$,
\begin{equation}
  Q(\xi) = \sum_{k=0}^\infty \frac{1}{k!} l_k(\xi,\dots, \xi) = \sum_{k=0}^\infty Q^k(\xi) \e_k \, .
\end{equation}

Conversely, given 
\begin{equation} \label{eq:Q_Taylor}
  Q = Q^k(\xi) \der{}{\xi^k} = \left( Q_0^k + \xi^i Q^k_i + \frac{1}{2} \xi^i \xi^j Q^k_{ji} + \dots  \right) \der{}{\xi^k} 
\end{equation}
we can define 
\begin{equation} \label{Linfty_sym}
  l_k(u_1, \dots, u_k) = [\dots[Q,u_1], \dots, u_k](0)
\end{equation}
for all $k \geq 0$ and $u_i \in V$. We identify vectors from $V$ with constant coefficients vector fields. For $L$,
\begin{equation} \label{Linfty_antisym}
 \iota \left( l_k(x_1, \dots, x_k) \right) =
(-1)^{\varepsilon}
 [\dots[Q,\iota(x_1)], \dots, \iota(x_k)](0)
\end{equation}
for all $k \geq 0$ and $x^i e_i \in L$. Here
$\iota(x)=(-1)^{\widetilde{x}} x^i \der{}{\xi^i}$.

\begin{theorem}
 Formulas~\eqref{Linfty_sym} and~\eqref{Linfty_antisym} define  symmetric and antisymmetric versions of $L_\infty$-algebra structure
 on $V$ and $L$ if and only if $Q^2=\frac{1}{2}[Q,Q]=0$.
\end{theorem}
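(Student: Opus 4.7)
The plan is to establish the symmetric case by direct coordinate computation and then deduce the antisymmetric case from the parity-shift isomorphism $V=\Pi L$. The key observation is that, identifying a vector $u\in V$ with the constant-coefficient vector field $u^i\partial/\partial\xi^i$, the iterated supercommutator $[\dots[Q,u_1],\dots,u_k]$ acts on $C^\infty(V)$ as the multiple directional derivative in the directions $u_1,\dots,u_k$ applied to $Q$. Evaluated at the origin, this extracts exactly the Taylor coefficients $Q^j_{i_k\cdots i_1}$ appearing in~\eqref{eq:Q_Taylor}, up to Koszul signs. Thus formula~\eqref{Linfty_sym} simply repackages the data of $Q$ as a sequence of multilinear brackets on $V$, and the theorem reduces to matching the structural axioms of Definition~\ref{def:Linfty_sym} against $Q^2=0$.

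Oddness and graded symmetry of the $l_k$ are formal: $Q$ is odd, each $u_i$ enters linearly, and any two constant-coefficient vector fields supercommute, so they may be exchanged inside the nested bracket at the cost of the appropriate Koszul sign. The substantive content is the equivalence of the higher Jacobi identities with $\tfrac{1}{2}[Q,Q]=0$. For this, I would expand $[Q,Q](\xi)$ in powers of $\xi$: by the super-Leibniz rule, the coefficient of $\xi^{i_1}\cdots\xi^{i_n}/n!$ is a symmetrized sum over splittings of the $n$ indices into a block of size $k$ (entering the inner $Q$) and a complementary block of size $l=n-k$ (entering the outer $Q$). After translating Taylor coefficients back into brackets via~\eqref{Linfty_sym}, this is precisely the shuffle sum in Definition~\ref{def:Linfty_sym}. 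Running the computation in reverse shows that the vanishing of each homogeneous component of $[Q,Q]$ is equivalent to one higher Jacobi identity, giving both directions of the ``if and only if''.

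The antisymmetric case is then obtained by transport through the bijection $\Pi\colon L\to V$ and the conversion formula $l_k^{\text{Sym}}(\Pi x_1,\dots,\Pi x_k)=(-1)^\varepsilon\,\Pi\,l_k^{\text{antiSym}}(x_1,\dots,x_k)$. A direct sign check shows that the antisymmetric higher Jacobi identity, with its factors $\sign\sigma\,\varepsilon(\sigma)\,(-1)^{kl}$, is exactly the pullback of its symmetric counterpart: the sign $(-1)^{kl}$ records the commutation of the $k$ inner parity shifts past the $l$ outer ones, while $\sign\sigma$ accounts for the rearrangement of odd $\Pi x_i$ factors into shuffled order. Since the parity-shift dictionary is a bijection, the two structures correspond, and the equivalence with $Q^2=0$ passes to the antisymmetric side.

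The main obstacle is the sign bookkeeping in the expansion of $[Q,Q]$, where Koszul signs coming from differentiations, from the parities of the $\xi^i$, and from moving $Q$ past constant fields can easily be mistracked. I would avoid this by computing $[Q,Q]$ in one stroke as a composition of derivations acting on a test monomial $\xi^{i_1}\cdots\xi^{i_n}$, so that every sign is produced by a single invocation of the super-Leibniz rule and the shuffle structure appears automatically from partitioning the $n$ differentiations between inner and outer $Q$. This reduces the combinatorics to a clean, symbolic identification of the two shuffle sums and completes the proof.
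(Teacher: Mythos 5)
The paper does not actually prove this theorem: it is stated as a known fact, with the argument delegated to the higher-derived-brackets literature (\cite{tv:higherder}, \cite{Durham_Graded_Q_THV19}). Your plan is, in substance, the standard proof of that cited result specialised to the present situation, where the ambient Lie superalgebra is $\Vect(V)$, the abelian subalgebra is the constant vector fields, and the projection is evaluation at the origin: Voronov's statement that the $n$-th Jacobiator of the derived brackets of $Q$ equals the $n$-th derived bracket of $\tfrac{1}{2}[Q,Q]$ is exactly your observation that the degree-$n$ Taylor component of the vector field $[Q,Q]$ is the shuffle sum of Definition~\ref{def:Linfty_sym}. The outline is correct, including the reverse implication (for a formal vector field, vanishing of every homogeneous Taylor component of $[Q,Q]$ does give $[Q,Q]=0$, so each higher Jacobi identity corresponds bijectively to one homogeneous component) and the reduction of the antisymmetric case to the symmetric one through the parity shift. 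What your route buys over the paper's is self-containedness; what it costs is the sign bookkeeping you already flag, and two spots deserve particular care: the identification of $[\dots[Q,u_1],\dots,u_k](0)$ with the coefficients $Q^k_{i_1\dots i_n}$ of~\eqref{eq:Q_Taylor} produces Koszul signs from moving odd arguments past one another and past $Q$, which must be kept consistent with the convention $\iota(x)=(-1)^{\widetilde{x}}x^i\,\partial/\partial\xi^i$ built into~\eqref{Linfty_antisym}; and in the d\'ecalage step the factors $\sign\sigma$ and $(-1)^{kl}$ of Definition~\ref{def:Linfty_antisym} must be verified to be exactly the discrepancy between the Koszul signs computed in $L$ and in $V=\Pi L$. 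Neither point is a gap in the argument, only in the bookkeeping that a complete write-up would have to carry out.
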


(Formulas~\eqref{Linfty_sym} and~\eqref{Linfty_antisym} are examples of higher derived brackets~\cite{tv:higherder}. As a reference, one can see~\cite{Durham_Graded_Q_THV19}.)

\begin{remark} Up to a sign, the coefficients $Q^k_0, Q^k_i, Q^k_{ji}, \dots$ in~\eqref{eq:Q_Taylor} are the structure constants
 of the $0$-ry, unary, binary, and so on, brackets. 
\end{remark}

An \emph{$L_\infty$-morphism} $F$
of two $L_\infty$-algebras, $(V_1, \{l_k\})$ and $(V_2, \{l'_k\})$, in the  symmetric 
version for concreteness, consists of a collection of maps
$\varphi_k: S^k V_1 \to V_2$, for $k \geq 1$, satisfying a sequence of 
relations. We give these relations for even elements $u_i \in V_1$ (then they can be 
extended by linearity): 
\begin{multline*}
 \sum_{k=0}^{n}
\sum_{\sigma \in (n-k,k)-\text{shuffles}}
\varphi_{k+1}
\left( 
l_{n-k}
\left( u_{\sigma(1)}, \dots , u_{\sigma(n-k)} \right), u_{\sigma(n-k+1)}, \dots , 
u_{\sigma(n)} 
\right) \\
=
\sum_{r=1}^n
\sum_{\substack{i_1 + \dots + i_r = n \\
i_1 > 0, \dots, i_r > 0}}
\sum_{\substack{\text{combinations} \ \tau \\ \text{of} \ i_1, \dots, i_r \\
\text{each symmetric term only once}}} 
l'_{r} \left(\varphi_{i_1}(u_{\tau(1)}, \dots, 
u_{\tau(i_1)}), \dots , \varphi_{i_r}(u_{\tau(i_{r-1}+1)}, \dots, 
u_{\tau(i_r)})\right) \, .
\end{multline*}
(Here for simplicity we have assumed that in both algebras the $0$-ary brackets are 
zero.)

The combinations of $i_1, \dots, i_r$ are such that, when we have several $ 
l'_{r} \left(\varphi_{i_1}(u_{\tau(1)}, \dots, u_{\tau(i_1)}), \dots \right)$, whose 
values are the same because of the symmetry of $l$'s and $\varphi$'s, we count them 
only once.

The first few relations are listed below.
\begin{align*}
 n=1:& \quad
 \varphi_1(l_1(u)) =l'_1(\varphi_1(u)) \, ,\\
 n=2:& \quad
 \varphi_1(l_2(u_1, 
u_2)) + \varphi_2(l_1(u_1), u_2) + \varphi_2(l_1(u_2),u_1) =l'_1(\varphi_2(u_1, 
u_2)) + l'_2(\varphi_1(u_1), \varphi_1(u_2))\, , \\
 n=3:& \quad 
 \varphi_1(l_3(u_1, u_2, u_3)) \\
 & + \varphi_2(l_2(u_1, u_2), u_3)+ \varphi_2(l_2(u_1, 
u_3), u_2) + \varphi_2(l_2(u_2, u_3), u_1) \\
& 
+ \varphi_3(l_1(u_1), u_2, u_3) + \varphi_3(l_1(u_2),u_1, u_3) + 
\varphi_3(l_1(u_3), u_1, u_2) \\
& = l'_1(\varphi_3(u_1, u_2, u_3)) \\
&+l'_2(
\varphi_1(u_1),\varphi_2(u_2, u_3))+ 
l'_2(\varphi_1(u_2),\varphi_2(u_1, u_3))
+l'_2(\varphi_1(u_3),\varphi_2(u_1, u_2)) 
\\
&+ l'_3(\varphi_1(u_1), \varphi_1(u_2), \varphi_1(u_3))\, .
\end{align*}

The language of $Q$-manifolds  makes it possible to express these equalities concisely 
in a single line. It is convenient to work with them this way, see e.g. the famous 
work of Kontsevich~\cite{Kontsevich_deform_quant}. Namely,
consider super vector spaces $V_1$   and $V_2$ as supermanifolds,   and let $Q_1$ and 
$Q_2$ correspond to the respective $L_\infty$-structures (symmetric version). Then  an 
$L_\infty$-morphism between these $L_\infty$-algebras is exactly the same as  a 
$Q$-morphism between the $Q$-manifolds $(V_1, Q_1)$   and $(V_2, Q_2)$: more 
precisely, the  multilinear maps comprising an $L_\infty$-morphism are obtained as the 
terms of the Taylor expansion of a supermanifold map $V_1\to V_2$. See more about this 
in~\cite{Durham_Graded_Q_THV19}.

\subsubsection{Homotopy Poisson brackets: $P_\infty$ and $S_\infty$ structures}
\label{subsubsec:Sinfty}

\begin{definition}
 An \emph{$P_\infty$-algebra} is an antisymmetric 
$L_\infty$-algebra (Def.~\ref{def:Linfty_antisym}) also equipped with an 
even associative  multiplication  which is commutative in the super sense, 
and the brackets satisfy the Leibniz rule:
\begin{equation}
    \{a_1, \dots ,a_k, bc\} = \{a_1, \dots ,a_k, b\}c + (-1)^{(\widetilde{a_1}+\dots 
+ 
\widetilde{a_k}+k+1) \widetilde{b}} b \{a_1, \dots ,a_k, c\}
    \, . 
\end{equation} 
\end{definition}

\begin{definition}
 An \emph{$S_\infty$-algebra} is a symmetric 
$L_\infty$-algebra (Def.~\ref{def:Linfty_sym}), 
also equipped with an 
even associative  multiplication  which is commutative in the super sense, 
and the brackets satisfy the Leibniz rule:
\begin{equation}
    \{a_1, \dots ,a_k, bc\} = \{a_1, \dots ,a_k, b\}c + (-1)^{(\widetilde{a_1}+\dots + 
\widetilde{a_k}+1) \widetilde{b}} b \{a_1, \dots ,a_k, c\}
    \, . 
\end{equation} 

\end{definition}
Note that $P_\infty$ and 
 $S_\infty$ structures cannot be transformed one into another. 

 Note that these higher structures are not merely abstract generalizations; they 
naturally arise in ordinary Poisson geometry. Cattaneo and 
Felder~\cite{cattaneo_felder_p_infty} discovered $\Pinf$-structures through 
the study of deformations of coisotropic submanifolds in an ordinary Poisson 
manifold.

% Khudaverdian and Th. Voronov'08.%~\cite{tv:higherpoisson}.
At the end of Subsec.~\ref{subsec:Poissonbrackets},
we mentioned that even and odd Poisson
brackets on $M$ can be obtained as derived brackets. 
If we replace Poisson bi-vector $P$
as in~\eqref{eq:bivector}
with an even multivector, which we write as a formal expansion 
\begin{equation}
 P = P_0 + P^a x_a^* + \frac{1}{2!} P^{ab} x_b^* x_a^* + \frac{1}{3!} P^{abc} x_c^* 
x_b^* x_a^* + \dots \, , \ \lsch P,P\rsch=0 \, , \
\widetilde{P}=0 \, ,
\end{equation}
we can generate on $M$ a series of antisymmetric brackets of alternating parities by 
using Voronov's ``higher derived brackets'' construction~\cite{tv:higherder}: 
 \begin{equation}
  \{f_1, \dots, f_k \}_P := \lsch \dots \lsch P,f_1\rsch, \dots ,f_k\rsch\big|_M \, , 
\quad 
\big|_M=\big|_{x^*=0} \, .
 \end{equation}
They will satisfy Leibniz rule by construction, and the condition $\lsch P,P\rsch=0$ 
will encode the higher Jacobi identities of $L_{\infty}$-algebra.
As Schouten brackets are derivations of degree $-1$, then only 
the 
term of degree $i$ contributes to the $i$-ary bracket. In particular, 
\begin{equation}
 \{ x^a \} = -P^a \, , \ \{ x^a, x^b \} 
= -(-1)^{\widetilde{a}}P^{ab}
 \, , \ \{ x^a, x^b, x^c \} = \pm P^{abc} 
 \, , 
 \dots
 \end{equation}
Thus, there is a 1-1-correspondence between $P_{\infty}$-structures on $\fun(M)$ and 
such multivector fields $P$. 
 
Similarly, we can consider an odd
Hamiltonian 
\begin{equation}
 H=H_0 + H^a p_a + \frac{1}{2!} H^{ab} p_b p_a + \frac{1}{3!} 
H^{abc} p_c p_b p_a + \dots \, , \ \{H,H\}=0 \, , \ \widetilde{H}=1 \, ,
\end{equation}
and generate on $M$ a series of odd symmetric brackets giving an 
$S_\infty$-structure on $\fun(M)$: 
 \begin{equation}
  \{f_1, \dots, f_k \}_H := \{\dots \{H,f_1\}, \dots ,f_k\}\big|_M \, , \quad 
\big|_M=\big|_{p_a=0} \, .
 \end{equation}
Only $H_i$ contributes to the $i$-ary bracket. In particular, 
\begin{equation}
 \{ x^a \} = \pm H^a \, , \ \{ x^a, 
x^b \} = \pm H^{ab}
 \, , \ \{ x^a, x^b, x^c \} = \pm H^{abc}  \, , 
 \dots
 \end{equation}
Again, there is a 1-1-correspondence between  $S_\infty$-structures on $\fun(M)$
and such odd Hamiltonians.

We can refer to a $P_\infty$-structure as an ``homotopy Poisson structure'' and to an 
$S_\infty$-structure as a ``homotopy odd Poisson structure'' or ``homotopy Schouten 
structure''.

\subsubsection{$Q$-morphisms and morphisms of Lie algebroids}
\label{subsec:Q_Lie_alg}

\begin{definition} \label{def:Liealgebroid}
  A vector bundle $E$ over a (super)manifold $M$ is a \textit{Lie algebroid }
if there is a Lie algebra structure on its sections $\Gamma(M,E)$
%and it is known how this structure interacts with the module structure of  $\Gamma(M,E)$ over $C^\infty(M)$. Namely, there is
along with a bundle map, called an \emph{anchor}, $a: E \to TM$, such that 
for all $u,v \in \Gamma(M,E)$
and $f \in C^\infty(M)$
\begin{align*}
    [u,fv] &= a(u)(f)v+(-1)^{\widetilde{u}\widetilde{f}}f[u,v] \, , \\
    a([u,v]) &= [a(u),a(v)] \, .
\end{align*}
\end{definition}

A Lie algebroid $E$ also 
manifests itself on the neighboring spaces $E^*, \Pi E, \Pi E^*$ (see about this viewpoint in~\cite{ThThQ2012}).
We will be interested in the manifestation on  
$E^*$ and $\Pi E^*$.

Let $x^a$ be coordinates on $M$, and $u_i$, $\xi^i$, $\eta_i$ are the coordinates in the fibers corresponding to a choice of a basis $e_i$ in $E$. On $\Pi E$, a Lie algebroid can be described by a homological vector field of weight $+1$ (as introduced by Vaintrob~\cite{Vaintrob97}) of the form
\begin{equation}
\label{eq:Lie_alg_Q}
    Q=\xi^i Q_i^a(x)\frac{\partial}{\partial x^a} + \frac{1}{2} \xi^i \xi^j Q^k_{ji}(x)  \frac{\partial}{\partial \xi^k} \, ,
\end{equation}    
Then, on $E$, we have:
\begin{equation}
a(e_i)=Q_i^a \frac{\partial}{\partial x^a} \, , \quad     
[ e_i , e_j ]  = (-1)^{\widetilde{j}}Q_{ij}^k(x)e_k \, ,
\end{equation}
and on $E^*$, we have an even Poisson bracket of weight $-1$ given by:
\begin{equation}
\{ x^a , x^b \}  = 0 \, , \quad 
\{ u_i , x^a \}  = Q^a_i(x)
\, , \quad 
\{ u_i, u_j \}  = (-1)^{\widetilde{j}} Q_{ij}^k(x) u_k
\, .
\end{equation}
Finally,  on $\Pi E^*$, we have an odd Poisson bracket of weight $+1$ with 
\begin{equation}
\{ x^a , x^b \}  = 0
\, , \quad 
\{ \eta_i , x^a \}  = Q^a_i(x)
\, , \quad 
\{ \eta_i, \eta_j \}  =(-1)^{\widetilde{j}}  Q_{ij}^k(x) \eta_k
\, .
\end{equation}

\begin{example} The tangent Lie algebroid (on $TM$) 
is formed by the Lie bracket of vector fields, with the identity for the anchor. On $\Pi TM$, it manifests as the exterior differential $d$ and on $\Pi T^*M$, as the Schouten bracket 
of multivector fields. On $T^*M$, the tangent algebroid appears 
through the canonical Poisson bracket of Hamiltonians. 
\end{example}

\begin{example}
Given a Poisson structure on $M$ defined by $P=\frac{1}{2} P^{ij}x_j^* x_i^*$, the cotangent Lie algebroid
 on $ T^*M$ is described by the Koszul bracket of differential forms. The bracket of the sections is given by the Koszul brackets of $1$-forms, and the anchor is the Koszul bracket between a $1$-form and a function.
On $\Pi T^*M$, the cotangent algebroid manifests through the 
odd Hamiltonian vector field: 
\begin{equation}
    Q = -x_i^* P^{ia} \frac{\partial}{\partial x^a} + \frac{1}{2}
    x_i^* x_j^*  \frac{\partial P^{ij}}{\partial 
x^k}\frac{\partial}{\partial \xi^k} = d_P = \lsch 
P,-\rsch\, .
\end{equation}
On $\Pi TM$, it manifests as the Koszul bracket.
\end{example}

\begin{definition}
 A \emph{Lie algebroid morphism} between Lie algebroids $E_1$ and $E_2$ over the same base $M$ is 
 a vector bundle map that preserves the bracket and the anchor.
 \end{definition}
 In particular, as a vector bundle map, it is a fiberwise linear map over $M$.

 \begin{theorem}~\cite{Vaintrob97}
  A fiberwise linear map $E_1 \rightarrow E_2$ over $M$ is a Lie algebroid morphism if and only if the induced fiberwise linear map $\Pi E_1 \rightarrow \Pi E_2$ is a $Q$-morphism.
 \end{theorem}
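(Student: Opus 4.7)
The plan is to verify the equivalence in local coordinates by reducing both conditions to the same system of equations on structure functions. Fix local coordinates $x^a$ on $M$ and linear fiber coordinates $\xi^i$ on $\Pi E_1$ and $\eta^\alpha$ on $\Pi E_2$, so that a fiberwise linear map $\varphi \co \Pi E_1 \to \Pi E_2$ over $M$ is encoded by functions $T^\alpha_i(x)$ with $\varphi^*(x^a) = x^a$ and $\varphi^*(\eta^\alpha) = T^\alpha_i(x)\,\xi^i$. Both $Q_1$ and $Q_2$ have the form~\eqref{eq:Lie_alg_Q}, with anchor coefficients $Q^a_{\bullet,i}$ and bracket coefficients $Q^k_{\bullet,ji}$. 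Since $Q_1$, $Q_2$, and $\varphi^*$ are all (graded) derivations, the $Q$-morphism equation $Q_1 \circ \varphi^* = \varphi^* \circ Q_2$ needs only to be checked on the generators $x^a$ and $\eta^\alpha$ of $\fun(\Pi E_2)$.

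Evaluated on $x^a$, the condition becomes $\xi^i Q^a_{1,i} = T^\alpha_i\,\xi^i\, Q^a_{2,\alpha}$, i.e.\ $Q^a_{1,i} = T^\alpha_i\, Q^a_{2,\alpha}$, which is exactly the statement that $\varphi$ intertwines the anchors on the basis sections. Evaluated on $\eta^\alpha$, the Leibniz rule on the left produces two contributions --- one from $Q_1$ acting on $T^\alpha_i(x)$ through its $x$-component and one from $Q_1$ acting on $\xi^i$ through its $\xi$-component --- and setting their sum equal to $\varphi^*\bigl(\tfrac12\, \eta^\beta \eta^\gamma Q^\alpha_{2,\gamma\beta}\bigr) = \tfrac12\, T^\beta_i T^\gamma_j\, \xi^i \xi^j\, Q^\alpha_{2,\gamma\beta}$ gives, after using the symmetry of $\xi^i \xi^j$ together with the anchor identity already obtained, exactly the relation on structure constants that encodes $\varphi_{\mathrm{lin}}([e_i,e_j]_1) = [\varphi_{\mathrm{lin}}(e_i), \varphi_{\mathrm{lin}}(e_j)]_2$. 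Both implications are then read off from these two coordinate identities: a Lie algebroid morphism furnishes functions $T^\alpha_i$ satisfying the anchor and bracket compatibilities, hence a $Q$-morphism, and conversely a $Q$-morphism forces precisely these identities, hence a Lie algebroid morphism.

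The main obstacle is sign bookkeeping in the super setting: Koszul signs must be tracked consistently when $Q_1$ differentiates the product $T^\alpha_i(x)\,\xi^i$, when fiber derivatives are applied to odd coordinates, and when the symmetric expression $\xi^i \xi^j$ is paired against the bracket coefficients $Q^k_{\bullet,ji}$, whose own symmetry type is dictated by the parities $\widetilde{i}, \widetilde{j}$. Once the sign conventions are made consistent with those of Subsec.~\ref{subsec:Q_Lie_alg}, the remainder of the argument is a direct computation; no further structural input is needed. (A coordinate-free variant would observe that $\Pi$ is functorial from vector bundles to supermanifolds and that the $Q$-structure on $\Pi E$ is the Chevalley--Eilenberg differential of Lie algebroid cohomology, whose naturality immediately yields the statement, but in the local form above the equivalence is transparent and self-contained.)
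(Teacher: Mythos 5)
The paper does not prove this statement: it is quoted from Vaintrob's note \cite{Vaintrob97} and used as a black box, so there is no in-paper argument to compare against. Your coordinate computation is the standard proof of this result and is correct in substance: encoding the bundle map by $T^\alpha_i(x)$, checking $Q_1\circ\varphi^*=\varphi^*\circ Q_2$ only on the generators $x^a$ and $\eta^\alpha$, and identifying the two resulting identities with anchor- and bracket-preservation is exactly how one establishes the equivalence. Two small points of hygiene: $\varphi^*$ is an algebra homomorphism, not a derivation, so the correct justification for checking only on generators is that \emph{both composites} $Q_1\circ\varphi^*$ and $\varphi^*\circ Q_2$ are derivations along the homomorphism $\varphi^*$, hence agree everywhere once they agree on generators; and the bracket identity you extract from the $\eta^\alpha$ component is preservation of brackets of \emph{basis} sections, which extends to arbitrary sections only after invoking the anchor identity together with the Leibniz axiom of Definition~\ref{def:Liealgebroid} --- worth one explicit sentence. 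With those clarifications (and the sign bookkeeping you defer actually carried out), the argument is complete and self-contained.
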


\subsubsection{$Q$-morphisms and morphisms of $L_\infty$-algebroids}
\label{subsec:Q_Lie_infty_alg}
 
\begin{definition}[~\cite{tv:higherpoisson}]
An \textit{$L_\infty$-algebroid}
is a vector bundle $E$ over a (super)manifold $M$ with an $L_\infty$-structure on its sections along with a sequence of multilinear maps into the tangent bundle, known as  \emph{higher anchors} $a_n$, such that 
for every $n$, we have
\begin{equation}
    [u_1,\dots,u_{n-1},fv] = a_n(u_1,\dots,u_{n-1})(f)v+(-1)^{\left({\widetilde{u_1} +\dots +\widetilde{u_{n-1}}}\right)\widetilde{f}}f[u_1,\dots,u_{n-1},v] 
\end{equation}
for all $u_1,\dots,u_{n-1},v \in \Gamma(M,E)$
and $f \in C^\infty(M)$. 
\end{definition} 

One can define morphisms between $L_\infty$-algebroids (over the same base) similarly to the cases of Lie algebroids and $L_\infty$-algebras, i.e as a sequence of multilinear vector bundle maps satisfying certain relations with respect to the brackets of sections and anchors, but it is easier to work with the following shortcut compactly containing all these relations~\cite{tv:microformal}\,:

\begin{definition}
 An \emph{$L_\infty$-morphism} between $L_\infty$-algebroids $E_1$ and $E_2$ over the same base $M$ is a fiberwise
 (in general non-linear) map $\Pi E_1 \rightarrow \Pi E_2$ which is a $Q$-morphism.
\end{definition}

\subsection{Thick morphisms and morphisms of $L_\infty$-algebroids}
\label{subsec:thickmorphism} 
 
 A \emph{thick morphism}~\cite{tv:microformal} $\Phi: M_1 \tto M_2$ is a
 (formal) Lagrangian submanifold in $-T^*M_2 \times T^* M_1$
%  considered with the symplectic form given by the difference between those on $T^*M_1$ and $T^*M_2$.
 defined by a \emph{generating   function} $S$, which is
a formal series in the target momentum coordinates $q_i$, with the coefficients being smooth functions of the source position coordinates $x$,
\begin{equation} \label{eq:S}
    S(x,q) = S^0(x) + \varphi^i(x) q_i
    + \frac{1}{2} S^{ij}(x) q_j q_i + \frac{1}{3!} S^{ijk}(x) q_k q_j q_i + \dots \ , 
\end{equation}
where indices are assumed to be symmetric.
Here the local coordinates in $M_1$ and $M_2$ are $x^a$ and $y^i$; and the corresponding coordinates in the fibers of the cotangent bundles $T^*M_1$ and $T^*M_2$ are $p_a$ and $q_i$, respectively.

The  \emph{pullback by a thick morphism} $\Phi$ is  a formal map $\Phi^* :  C^\infty (M_2) \rightarrow C^\infty (M_1)$ between the corresponding spaces of smooth functions regarded as infinite-dimensional supermanifolds,
\begin{align}
%\Phi^* : \ C^\infty (M_2) \rightarrow C^\infty (M_1) \, , \
\label{eq:thick_morphsm_original_formula}
\Phi^* [g](x) &= g(y) + S(x,q) - y^i q_i \, , \\
\label{eq:thick_morph_diffeq1}
q_i &= \frac{\partial g}{\partial y^i} (y) \, , \\
\label{eq:thick_morph_diffeq2}
y^i &= (-1)^{\widetilde{i}} \frac{\partial S}{\partial q_i} (x,q) \, .
\end{align}
Note that we treat a super vector space as a supermanifold, as here, only even vectors are considered, but they may depend on auxiliry odd parameeters.

Equality~\eqref{eq:Q_phi_related} implies that every $Q$-morphism $\varphi: M_1 \rightarrow M_2$ induces a chain map $\varphi^*$ between the corresponding complexes $\bigl(\fun(M_2),Q_2\bigr) \to \bigl(\fun(M_1),Q_1\bigr)$. By a ``complex'' we mean just a super vector space equipped with an odd differential.  A chain map is linear by the definition. What is happening in the case of a thick morphism? One needs to introduce a  nonlinear chain map.

\begin{definition}~\cite{tv:tangent_functor}
Let $(V,d_1)$ and $(W,d_2)$ be two chain complexes, i.e. (super) vector spaces $V$ and $W$, respectively equipped with odd operators $d_1$ and $d_2$, both of which have square zero. 
A \textit{non-linear chain map} $f: V \rightarrow W$ is a $Q$-morphism of $Q$-manifolds 
$(V,d_1)$ and $(W,d_2)$.
\end{definition}

The polarization process, through a Taylor expansion, assigns to non-linear map $f: V \rightarrow W$, a series of even symmetric maps 
\begin{equation} \label{eq:fk}
 f_k : V^k \rightarrow W \ . 
\end{equation}
The $k$-th coefficient of the Taylor expansion of $f$ is the restriction of $f_k$ to the diagonal. 

\begin{theorem}~\cite{tv:tangent_functor}
The maps~\eqref{eq:fk} obtained from a non-linear chain map satisfy the following relations:
\begin{align*}
   &d(f_0)=0 \, , \\
   &d(f_1(v))=f_1(d(v)) \, , \\
   &d(f_2(v_1,v_2))=f_2(d(v_1),v_2)+(-1)^{\widetilde{v_1}} f_2(v_1,d(v_2)) \, , \\
   &d(f_3(v_1,v_2,v_3))=f_2(d(v_1),v_2,v_3)+(-1)^{\widetilde{v_1}} f_3(v_1,d(v_2),v_3)+(-1)^{\widetilde{v_1}+\widetilde{v_2}} f_3(v_1,v_2, d(v_3)) \, , \\
   & \dots 
\end{align*}
which means that Leibniz rule is satisfied for each $k \geq 2$.
\end{theorem}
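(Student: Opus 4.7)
The plan is to unfold the $Q$-morphism condition $f^{*}\!\circ d_{2}=d_{1}\!\circ f^{*}$ into a single pointwise equation for $f\co V\to W$, to Taylor-expand that equation in the vector argument, and then to polarize. Since $d_{1}$ and $d_{2}$ are \emph{linear} odd operators, the associated homological vector fields $Q_{1}$ on $V$ and $Q_{2}$ on $W$ are linear: under the canonical identification $T_{v}V\cong V$, one has $(Q_{1})_{v}=d_{1}(v)$ and similarly for $Q_{2}$. The reformulation of~\eqref{eq:Q_phi_related} as $f$-relatedness of $Q_{1}$ and $Q_{2}$ then collapses into the single formal identity
\begin{equation*}
 (Df)_{v}\bigl(d_{1}(v)\bigr)=d_{2}\bigl(f(v)\bigr),\qquad v\in V.
\end{equation*}
Every relation in the statement is to be extracted from this one equation.

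Next, I would plug in the Taylor expansion $f(v)=\sum_{k\geq 0}\frac{1}{k!}f_{k}(v,\dots,v)$; by symmetry of $f_{k}$, one has $(Df)_{v}(w)=\sum_{k\geq 1}\frac{1}{(k-1)!}f_{k}(v,\dots,v,w)$. Linearity of $d_{1}$ and $d_{2}$, together with matching of homogeneous degree-$n$ components in $v$, yields the ``diagonal'' identity
\begin{equation*}
 n\,f_{n}\bigl(v,\dots,v,d_{1}(v)\bigr)=d_{2}\bigl(f_{n}(v,\dots,v)\bigr).
\end{equation*}
For $n=0$ the left-hand side vanishes and one recovers $d_{2}(f_{0})=0$; for $n=1$ the identity already says that $f_{1}$ intertwines $d_{1}$ and $d_{2}$.

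Finally, I would polarize these diagonal identities by substituting $v=v_{1}+\dots+v_{n}$ and extracting the fully multilinear part in $(v_{1},\dots,v_{n})$. For even $v_{k}$ the computation is transparent: the right-hand side produces $n!\,d_{2}\bigl(f_{n}(v_{1},\dots,v_{n})\bigr)$, while the left-hand side, after separating the $n$ choices for which $v_{k}$ is acted on by $d_{1}$ and using symmetry to restore the canonical order of the remaining arguments, produces $n!\sum_{k=1}^{n}f_{n}(v_{1},\dots,d_{1}(v_{k}),\dots,v_{n})$; dividing by $n!$ gives the sign-free form of the stated identities. Extension by linearity to arbitrary homogeneous $v_{k}$ reintroduces the Koszul signs $(-1)^{\widetilde{v_{1}}+\dots+\widetilde{v_{k-1}}}$ produced when the odd operator $d_{1}$ is transported across $v_{1},\dots,v_{k-1}$, giving exactly the relations in the statement. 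The one delicate point is this sign bookkeeping, but once a sign convention for graded symmetry of $f_{k}$ is fixed it becomes mechanical; the substantive conceptual step is really the initial translation of $Q$-relatedness into the pointwise identity $(Df)_{v}(d_{1}(v))=d_{2}(f(v))$.
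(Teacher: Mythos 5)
Your proposal is correct and takes the expected route: the paper itself states this theorem only as a citation to Voronov's work and gives no proof, and your argument --- rewriting $Q$-relatedness of the two linear homological vector fields as the pointwise identity $(Df)_v\bigl(d_1(v)\bigr)=d_2\bigl(f(v)\bigr)$, matching homogeneous components of the Taylor expansion to get $n\,f_n(v,\dots,v,d_1(v))=d_2(f_n(v,\dots,v))$, and then polarizing with Koszul signs --- is precisely the standard derivation underlying the cited result. No gaps; the only care needed is the sign bookkeeping you already flag.
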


Similarly, a non-linear chain map $f: \Pi V \rightarrow \Pi W$ is equivalent to a series of anti-symmetric multi-linear maps $V^k \rightarrow W$ of alternating parities.

Now, the fact that a $Q$-morphism $\varphi$ induces a chain map $\phi^*$ generalizes as follows.

\begin{definition}~\cite{tv:tangent_functor}
 An even thick morphism $\Phi: M_1 \tto M_2$ is a thick $Q$-morphism if
 \begin{equation}
Q^a_1(x) \der{S}{x^a}(x,q) = Q^b_2 \left(y^i=(-1)^{\widetilde{q_i}} \der{S}{q_i} \right) q_b   \, ,
 \end{equation}
 where $S=S(x,q)$ is an even generating function.
 An odd thick morphism $\Phi: \Pi M_1 \oto \Pi M_2$ is a thick $Q$-morphism if
 \begin{equation}
Q^a_1(x) \der{S}{x^a}(x,q) = Q^b_2 \left(\der{S}{y_i^*} \right) y_i^* \, ,
 \end{equation}
 where $S=S(x,y^*)$ is an odd generating function.
\end{definition}

\begin{theorem}~\cite{tv:tangent_functor}
 Every even thick $Q$-morphism $\Phi: M_1 \tto M_2$ induces a non-linear chain map
 \begin{equation}
  \Phi^*: \fun(M_2) \rightarrow \fun(M_1) \, .
 \end{equation}
 Every odd thick $Q$-morphism $\Phi: \Pi M_1 \oto \Pi M_2$ induces  a non-linear chain map
 \begin{equation}
  \Phi^*: \fun(\Pi M_2) \rightarrow \fun(\Pi M_1) \, .
 \end{equation}
\end{theorem}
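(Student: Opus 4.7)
The plan is to verify the infinite-dimensional $Q$-morphism identity
$d(\Phi^*)_g(Q_2 g) = Q_1(\Phi^*[g])$
for every $g \in \fun(M_2)$, which by the definition of a non-linear chain map as a $Q$-morphism of $Q$-manifolds is exactly what the theorem asserts (with $\fun(M_2)$ and $\fun(M_1)$ viewed as infinite-dimensional supermanifolds equipped with the linear homological vector fields $g \mapsto Q_2 g$ and $f \mapsto Q_1 f$, respectively).

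First I would carry out a Hamilton--Jacobi type cancellation on the formulas~\eqref{eq:thick_morphsm_original_formula}--\eqref{eq:thick_morph_diffeq2}: differentiating $\Phi^*[g](x) = g(y) + S(x,q) - y^i q_i$ with respect to $x^a$, treating $y = y(x,g)$ and $q = q(x,g)$ as the implicit functions determined by~\eqref{eq:thick_morph_diffeq1}--\eqref{eq:thick_morph_diffeq2}, the chain-rule contributions involving $\partial y/\partial x$ and $\partial q/\partial x$ cancel in pairs thanks to the stationary equations $q_i = \partial g/\partial y^i$ and $(-1)^{\widetilde{i}} \partial S/\partial q_i = y^i$, leaving
\[
\frac{\partial \Phi^*[g]}{\partial x^a}(x) \;=\; \frac{\partial S}{\partial x^a}\bigl(x, q(x,g)\bigr) \, .
\]
The very same cancellation applied to a first variation of $g$ along a test function $h$ produces the variational formula
\[
d(\Phi^*)_g(h)(x) \;=\; h\bigl(y(x,g)\bigr) \, ,
\]
so that the differential of the pullback at the point $g$ is literally ``evaluation at the stationary point determined by $g$''.

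With these two formulas in hand, the conclusion is immediate. Applying $Q_1 = Q_1^a(x)\,\partial_{x^a}$ to $\Phi^*[g]$ and invoking the thick $Q$-morphism condition at the stationary $q = q(x,g)$ (with the associated $y = y(x,g)$) yields
\[
Q_1 \Phi^*[g](x) \;=\; Q_1^a(x)\,\frac{\partial S}{\partial x^a} \;=\; Q_2^b\bigl(y(x,g)\bigr)\, q_b(x,g) \;=\; (Q_2 g)\bigl(y(x,g)\bigr) \, ,
\]
which by the variational formula is exactly $d(\Phi^*)_g(Q_2 g)(x)$. The odd statement is proved by the same skeleton with the odd generating function $S(x, y^*)$ and the odd thick $Q$-morphism condition replacing their even counterparts; the structural cancellations go through verbatim.

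The main difficulty I expect is not in any single step but in the sign bookkeeping of the super variational calculus --- in particular, when commuting the odd objects $q_i$, $y^i$ and their variations past partial derivatives during the cancellation steps, and when passing between the even and odd versions. Once a consistent sign convention matching formula~\eqref{eq:S} is pinned down, the argument is purely formal: no convergence issues arise, since $S$ is a formal power series in the fibre coordinates and the stationary equations~\eqref{eq:thick_morph_diffeq1}--\eqref{eq:thick_morph_diffeq2} can be solved order by order.
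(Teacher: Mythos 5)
The paper states this theorem as a cited result from \cite{tv:tangent_functor} and gives no proof of its own, so there is nothing to compare against line by line; your argument is the standard one from that reference and is correct. The two key cancellations you identify --- that the stationarity equations \eqref{eq:thick_morph_diffeq1}--\eqref{eq:thick_morph_diffeq2} kill the implicit-dependence terms, giving $\partial_{x^a}\Phi^*[g]=\partial_{x^a}S(x,q(x,g))$ and $d(\Phi^*)_g(h)=h\circ y(\cdot,g)$ --- are exactly what reduces the chain-map identity to the thick $Q$-morphism condition evaluated at $q=q(x,g)$, and the odd case does go through verbatim.
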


\begin{definition}~\cite{TVnonlinearpullback}
Given a thick morphism $\Phi: M_1 \tto M_2$, Hamiltonians $H_1 \in C^\infty(T^*M_1)$ and $H_2 \in C^\infty(T^*M_2)$ are called $\Phi$-related if and only if
\begin{equation}
H_1\left(x, \der{S}{x}(x,q)\right)
= 
H_2\left((-1)^{\widetilde{q_i}}\der{S}{q_i}(x,q), q\right)
\end{equation}

\end{definition}

\begin{theorem}
\cite{TVnonlinearpullback}
\label{thm:main_thick}
Let Hamiltonians $H_1 \in  C^\infty(T^* M_1)$ and $H_2 \in  C^\infty(T^* M_2)$ define 
%``symmetric''  $L_\infty$-structures (3(c)) on
$S_\infty$-structures on
$C^\infty(M_1)$ and on $C^\infty(M_2)$, respectively.

If $H_1$ and $H_2$ are $\Phi$-related, then:
\begin{enumerate}
 \item  the Hamilton-Jacobi vector fields $X_{H_i}\in \Vect(\fun(M_i))$ are $\Phi^*$-related;
 \item  $\Phi^*: C^\infty(M_2) \rightarrow
C^\infty(M_1)$ is an $L_\infty$-morphism.
\end{enumerate}

\end{theorem}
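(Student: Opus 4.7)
The plan is to prove item (1) by a direct variational computation with the generating function $S=S(x,q)$, and then to deduce item (2) from (1) via the general correspondence between $Q$-morphisms and $L_\infty$-morphisms recalled in Subsec.~\ref{subsubsec:Qmanifolds_Linfty}, applied to the infinite-dimensional supermanifolds $\fun(M_i)$ equipped with the homological vector fields $X_{H_i}$.

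First, I would extract from \eqref{eq:thick_morphsm_original_formula}--\eqref{eq:thick_morph_diffeq2} two basic identities for the pullback $f=\Phi^*[g]$. Writing $(y(x),q(x))$ for the critical point determined by $q_i=\partial g/\partial y^i(y)$ and $y^i=(-1)^{\widetilde{i}}\partial S/\partial q_i(x,q)$, direct differentiation of \eqref{eq:thick_morphsm_original_formula} combined with the critical-point equations yields (up to super signs)
\begin{equation*}
 \frac{\partial f}{\partial x^a}(x) = \frac{\partial S}{\partial x^a}(x,q(x)), \qquad
 \bigl(d\Phi^*\bigr)_g(\delta g)(x) = \delta g\bigl(y(x)\bigr).
\end{equation*}
Applying the second identity to $\delta g = X_{H_2}[g]$ gives $\bigl((\Phi^*)_*X_{H_2}\bigr)[f](x) = H_2\bigl(y(x),q(x)\bigr)$; the $\Phi$-relatedness hypothesis converts this into $H_1(x,\partial S/\partial x(x,q))$, and the first identity identifies the latter with $H_1(x,\partial f/\partial x) = X_{H_1}[f](x)$. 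This is exactly the $\Phi^*$-relatedness asserted in (1).

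For item (2), I would use that the $S_\infty$-structure on $\fun(M_i)$ defined by the derived brackets of $H_i$ coincides with the symmetric $L_\infty$-structure obtained by Taylor-expanding $X_{H_i}$ at the origin of $\fun(M_i)$ viewed as a supermanifold: substituting $X_H[f](x)=H(x,\partial f/\partial x)$ into formula~\eqref{Linfty_sym} and evaluating on constants $f_1,\dots,f_k\in\fun(M_i)$ recovers $\{\dots\{H_i,f_1\},\dots,f_k\}\big|_{M_i}$ term by term. Item (1) then says precisely that $\Phi^*$ is a $Q$-morphism between $(\fun(M_2),X_{H_2})$ and $(\fun(M_1),X_{H_1})$, so its Taylor components at the origin assemble into the required $L_\infty$-morphism between the $S_\infty$-structures.

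The main obstacle I anticipate is making the infinite-dimensional supermanifold picture of $\fun(M)$ rigorous: the two envelope-type identities, the identification of Taylor coefficients of $X_{H_i}$ with the derived brackets, and the passage from $Q$-morphisms to $L_\infty$-morphisms all have to be reinterpreted formally over the infinite-dimensional coordinates, and the Koszul signs tracked throughout. Once a clean convention is fixed, however, the substantive content of the theorem reduces to the single chain of equalities in item (1) that combines the envelope identity with $\Phi$-relatedness, from which (2) follows almost tautologically.
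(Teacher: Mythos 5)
The paper does not prove this theorem: it is imported verbatim from the cited reference \cite{TVnonlinearpullback}, so there is no in-paper proof to compare against. Your proposal is correct and reconstructs essentially the original argument of that reference: the two envelope identities $\partial_a\Phi^*[g]=\partial S/\partial x^a(x,q(x))$ and $d\Phi^*|_g(\delta g)=\delta g\circ y(\cdot)$, combined with $\Phi$-relatedness of the Hamiltonians, give $\Phi^*$-relatedness of the Hamilton--Jacobi fields, and item (2) then follows from the $Q$-morphism/$L_\infty$-morphism dictionary once one checks that the derived brackets of $H_i$ are the Taylor coefficients of $X_{H_i}$ at the origin of $\fun(M_i)$ --- exactly the route taken in the source.
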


\begin{remark}
    The opposite direction statement is also true. 
\end{remark}

\subsection{Algebraic theory of (formal) $\hbar$-differential operators}
\label{subsec.hdif}

The algebraic theory of (formal) $\hbar$-differential 
operators 
has been developed in~\cite{shemy:koszul}.
Such operators are used for generating higher
brackets structures. (Some close considerations can be found in~\cite{behrend-peddie-xu:2023}.)

Informally, a $\hbar$-differential operators are infinite-order differential 
operators with smooth function coefficients, which are also power series in $\hbar$. 
We also require that every partial derivative comes with a factor of 
$\hbar$ in its coefficient. For example, the coefficients of third-order 
derivatives must be divisible by at least 
$\hbar^3$.

We use notation 
\begin{equation*}
\hat p_a=-i\hbar \der{}{x^a} \, , 
\end{equation*}
which is
inspired by the  momentum operators  in quantum mechanics. Then a \emph{formal 
$\hbar$-differential operator} is a formal non-commutative formal power series
\begin{equation}\label{eq.formalL}
    L=L^0(\hbar,x)+L^a(\hbar,x)\hp_a  + 
L^{a_1a_2}(\hbar,x)\hp_{a_1}\hp_{a_2} +\ldots
\end{equation}
where the coefficients are smooth functions that are also formal power series in $\hbar$, considered together with the ``Heisenberg commutation relation''
\begin{equation}\label{eq.heis}
    [\hat p_a,f]=-i\hbar \der{f}{x^a}\,.
\end{equation}

Define the \emph{total degree} as the cumulative
degree in $\hbar$ and $\hp_a$'s.
The grading is invariant under the change of variables due to the Heisenberg 
relation~\eqref{eq.heis}. Since the commutation relations preserve the total degree 
(i.e., an $\hbar$ appears each time a partial derivative operator disappears), we 
obtain a grading on the 
$\hbar$-differential operators, as opposed to the usual filtration of differential 
operators.

It makes formal
$\hbar$-differential operators a graded algebra. 
Re-writing~\eqref{eq.formalL} as the sum of homogeneous 
components, we get
\begin{multline}\label{eq.formalL2}
    L=\sum_{n=0}^{+\infty} L^{[n]} 
    =
    \sum_{n=0}^{+\infty}\Bigl(L_0^{a_1\ldots 
a_n}(x)\hp_{a_1}\ldots\hp_{a_n}+
    (-i\hbar)L_1^{a_1\ldots 
a_{n-1}}(x)\hp_{a_1}\ldots\hp_{a_{n-1}}+
    \ldots +(-i\hbar)^nL_n^0(x)\Bigr)  \, ,
%     \\
%     \equiv  \sum_{n=0}^{+\infty} L^{[n]}\,.
\end{multline}
where the coefficients are smooth functions of $x$ and have 
no dependency on $\hbar$. 

One can show~\cite{shemy:koszul} that the natural action of 
formal $\hbar$-differential operators on the algebra
of smooth functions which are also 
formal power series in $\hbar$, is well defined and 
respects  grading. There is a well-defined  action of 
formal $\hbar$-differential operators on  functions of the 
form $e^{\ih \la g(x)}$, where $g(x)$ is a formal power 
series in $\hbar$,   which  gives   products   of $e^{\ih 
\la g(x)}$ with   formal power series in both $\hbar$ and 
$\la$.

Since $\hbar$-differential operators are technically infinite-order 
differential operators, the question arises of how to define their symbol, as there 
is no highest order. The answer is to take the top-order term in each homogeneous 
component.
\begin{definition}
The \emph{principal symbol} of a formal 
$\hbar$-differential 
operator $L$ given by~\eqref{eq.formalL2} is the 
following formal function on $T^*M$:
\begin{equation}\label{eq.Lmodh}
    \symb(L) = L\pmod \hbar  = \sum_{n=0}^{+\infty} 
L_0^{a_1\ldots a_n}(x)\,p_{a_1}\ldots\,p_{a_n}\, .
\end{equation}
\end{definition}
That is the principal symbol of a formal 
$\hbar$-differential operator is a formal
Hamiltonian.
It was proved~\cite{shemy:koszul} that this function is 
well-defined.

% This principal symbol is different from the usual 
% (coordinate-dependent) 
% \emph{full 
% symbol} of $L$, which is obtained by formally replacing 
% $\hp_a$ with
% $p_a$ in the expansion~\eqref{eq.formalL} without setting 
% % $\hbar$ to zero. %Furthermore, if our formal $\hbar$-differential  operator $L$ is finite and has terms of maximal  totaldegree $d$, we can speak of its principal symbol, which is  given by $L_0^{a_1\ldots a_d}(x)\,p_{a_1}\ldots\,p_{a_d}$.
% It is also different from the usual notion of the principal symbol for differential operators of finite order $d$, which is a homogeneous polynomial in $p_a$ of degree $d$ corresponding to top order derivatives only.

\begin{lemma}~\cite{shemy:koszul}
\label{lem.symbcommut} For formal $\hbar$-differential 
operators,
\begin{equation}\label{eq.symbprod}
    \symb(AB)=\symb(A)\symb(B)\, .
\end{equation}
% (hence $\symb([A,B])=0$    for all $A$, $B$).
The commutator $[A,B]$ is always divisible by $\hbar$ and 
\begin{equation}\label{eq.symbcommut}
    \symb(i\hbar^{-1}[A,B])=\{\symb(A),\symb(B)\}\,,
\end{equation}
where at the right-hand side there is the Poisson  bracket on $T^*M$. 
\end{lemma}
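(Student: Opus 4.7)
The plan is to reduce the lemma to a direct computation at the level of monomial operators by exploiting the grading. Write $A = \sum_n A^{[n]}$ and $B = \sum_n B^{[n]}$ in the form~\eqref{eq.formalL2}. Because $\symb$ picks out the coefficient of the $\hbar^0$-part of each homogeneous component, and because the product and commutator respect the total grading (an $\hbar$ appears each time a $\hp_a$ disappears), it suffices to verify the two identities on monomial summands $A = f(x)\hp_{a_1}\cdots \hp_{a_r}$ and $B = g(x)\hp_{b_1}\cdots \hp_{b_s}$ with $f,g$ independent of $\hbar$. The symbols are then $\symb(A) = f(x)p_{a_1}\cdots p_{a_r}$ and $\symb(B) = g(x)p_{b_1}\cdots p_{b_s}$.

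First I would treat the multiplicative statement. To compute $AB$ one must move $\hp_{a_1},\dots,\hp_{a_r}$ across $g(x)$ using the Heisenberg relation~\eqref{eq.heis}; each such commutation destroys one momentum factor and produces a factor of $-i\hbar$ times a partial derivative of $g$, so its contribution is divisible by $\hbar$. The only term surviving modulo $\hbar$ is the one in which no commutation is performed, namely $f(x)g(x)\hp_{a_1}\cdots \hp_{a_r}\hp_{b_1}\cdots \hp_{b_s}$. Reducing modulo $\hbar$ and replacing each $\hp_a$ by $p_a$ yields $\symb(A)\symb(B)$, which establishes~\eqref{eq.symbprod}.

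The same observation applied to $BA$ yields the same leading monomial, so $\symb(AB) = \symb(BA)$ and $\symb([A,B]) = 0$, proving that $[A,B]$ is divisible by $\hbar$. To identify $\symb(i\hbar^{-1}[A,B])$ I would retain exactly the terms that arise from a single application of the Heisenberg relation in the normal ordering of $AB$ and of $BA$; terms coming from two or more commutations carry $\hbar^2$ or higher and contribute zero after dividing by $\hbar$ and taking the symbol. The single-commutation contribution to $AB$ is the sum, over the $\hp_{a_i}$'s in $A$, of $-i\hbar$ times $f\,(\partial g/\partial x^{a_i})$ multiplied by the remaining string of momenta (with Koszul signs from the super Leibniz rule), and symmetrically $BA$ contributes analogous terms from pushing the $\hp_{b_j}$'s past $f$. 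Multiplying the difference by $i\hbar^{-1}$ and replacing momenta by symbols produces, monomial by monomial, exactly the two sums appearing in the canonical Poisson bracket formula~\eqref{eq:can_Poiss} applied to $\symb(A)$ and $\symb(B)$.

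The main obstacle is neither the algebraic mechanism (which is driven entirely by the Heisenberg relation and the grading) nor any issue of convergence (which is automatic because the grading bounds the number of homogeneous pieces contributing in any given total degree). Rather, it is the bookkeeping of Koszul signs: one must verify that when $\hp_a$ supercommutes past a string of momenta and a function $g$, the sign produced agrees with the sign prescribed by the canonical formula~\eqref{eq:can_Poiss}. I would handle this by first doing the calculation with all-even indices to confirm the structural shape of the result, then reintroducing parities through the standard Koszul rule, and finally matching the final expression term-by-term against~\eqref{eq:can_Poiss}.
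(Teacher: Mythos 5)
The paper does not actually prove this lemma---it is imported verbatim from the authors' earlier work \cite{shemy:koszul} and used as a black box---so there is no in-paper argument to compare against. Your proposal is correct and is the standard proof: the reduction to $\hbar$-free monomials is justified by the total grading (any coefficient already carrying a factor of $\hbar$ can only contribute terms divisible by $\hbar$, and for the commutator by $\hbar^2$, so it dies in the symbol), and the normal-ordering count via the Heisenberg relation \eqref{eq.heis} correctly isolates the zero-commutation term for \eqref{eq.symbprod} and the single-commutation terms for \eqref{eq.symbcommut}. The only point to tighten is that the bracket here is the supercommutator $[A,B]=AB-(-1)^{\widetilde{A}\widetilde{B}}BA$, so ``$\symb(AB)=\symb(BA)$'' should read $\symb(AB)=(-1)^{\widetilde{A}\widetilde{B}}\symb(BA)$ (the Koszul sign from reordering the supercommuting momenta and coefficients); this falls under the sign bookkeeping you already flag and does not affect the structure of the argument.
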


%\begin{remark} The name ``formal $\hbar$-differential operator'' appears in~\cite{tv:microformal}, but there is no theory of such operators there. There was no general theory of  formal $\hbar$-differential operators in~\cite{tv:microformal} such as introduced above.  In particular, to our knowledge the notion of total degree appears here for the first time. \end{remark} 

The  following definitions were introduced by Th.~Voronov~\cite{tv:higherder}  and    are a modification of the construction due to Koszul~\cite{koszul:crochet85}.
\begin{definition}%[\cite{tv:higherder}, \cite{tv:microformal}]
\label{def:quantumbr}
For an operator $L$ on an algebra,
the   \emph{quantum $n$-bracket} and 
the  \emph{classical $n$-bracket} ($n=0,1,2,3, \ldots$) 
generated by $L$ are respectively
\begin{align}
\{f_1,\ldots,f_n\}_{L,\hbar}&:=  
 (-i\hbar)^{-n}  \left[\ldots [L,f_1],\ldots,f_n\right](1)
\label{eq.qubrack} \, , \\
   \{f_1,\ldots,f_n\}_{L} &:=
   (-i\hbar)^{-n}  \left[\ldots 
[L,f_1],\ldots,f_n\right](1)\pmod \hbar \label{eq.clbrack} 
\, .
\end{align}
\end{definition}

Here $f_i$ are functions on a supermanifold or 
elements of an abstract commutative superalgebra.
In order to avoid negative powers of 
$\hbar$, we assume that any $n$-fold commutator 
$\left[\ldots [L,f_1],\ldots,f_n\right]$ 
in the above 
formulas is divisible by $(-i\hbar)^{n}$.
In particular, this is true for formal 
$\hbar$-differential operators on supermanifolds as defined 
here. 
 
 \begin{example}[$0$-, $1$- and $2$-brackets]
 The    quantum  $0$-bracket  is simply
 \begin{align}\label{eq.brack0}
    \{\void\}_{L,\hbar}&=L(1)\,;
 \intertext{for the  quantum $1$-bracket   take $[L,f](1)=L(f1)-(-1)^{\Lt\ft}fL(1)=L(f)-L(1)f$, hence}
\label{eq.brack1}
    \{f\}_{L,\hbar}&=i\hbar^{-1}\bigl(L(f)-L(1)f\bigr)\,;
\intertext{similarly, for the $2$-bracket one has}
\label{eq.brack2}
   \{f,g\}_{L,\hbar}&=-\hbar^{-2} \bigl(L(fg)-L(f)g-(-1)^{\Lt\ft}fL(g)+ L(1)fg\bigr)\,.
 \end{align}
\end{example}

Quantum brackets are themselves differential operators (\textit{not}
$\hbar$-differential operators!) of finite order $\geq 1$ in each argument, with coefficients for the derivatives of order $\geq 2$ proportional to $\hbar$.
Furthermore, it is known~\cite{tv:higherder} that for any 
any $n$, the $n$-bracket generates the $(n+1)$-bracket 
as the obstruction 
to the Leibniz rule:
\begin{multline}\label{eq.hleibniz}
    \{f_1,\ldots,f_{n-1},fg\}_{L,\hbar}=\{f_1,\ldots,f_{n-1},f\}_{L,\hbar}\,g+
    (-1)^{\e}f\,\{f_1,\ldots,f_{n-1},g\}_{L,\hbar}\\
    + (-i\hbar)\{f_1,\ldots,f_{n-1},f,g\}_{L,\hbar}\,,
\end{multline}
where $(-1)^{\e}=(-1)^{(\tilde L +\ft_1+\ldots +\ft_{n-1})\ft}$.
We can call this ``$\hbar$-modified Leibniz identity'' for the collection of the brackets.
Modulo $\hbar$ the 
extra terms  disappear  and 
the resulting classical brackets each satisfy the Leibniz rule. (And they seize to be linked via   identities~\eqref{eq.hleibniz}.)
Hence, classical $n$-brackets are multiderivations
and can be generated by a Hamiltonian. In~\cite{shemy:koszul}, we 
proved~\label{thm.symbol} 
that this Hamiltonian $H$ for the classical brackets 
generated by $L$ is  the principal symbol of $L$, 
$H=\symb(L)$.

% !!!!!!!!!!!!!!!!!!!!!!!!!!!!!!! To see better the appearance of quantum brackets, consider an example of an $\hbar$-differential operator of degree $3$. Then there will be a $0$-bracket (of order $3$ in $\hbar$), a a $1$-, a $2$- and a $3$-bracket.......................!!!!!!!!!!!!!!!!!!!!!!!!!!!!!!

\section{Statement of the problem}
\label{sec:statement_of_the_problem}
\subsection{
%Classical (binary) Koszul bracket and 
The de Rham to Lichnerowicz diagram, 
where Koszul brackets are mapped into Schouten brackets}
\label{subsec:diagram_for_bivectorP}
If $M$ is equipped with an even Poisson structure,
then there is a single binary odd bracket induced 
on differential forms 
$\O(M)=C^\infty(\Pi T^*M)$, 
called the \emph{(classical) Koszul bracket}. 
% It can be defined in many ways; in particular, 
Koszul constructed~\cite{koszul:crochet85} 
an odd second-order differential operator 
acting on $\O(M)$ that generates this bracket. 
% From the viewpoint of Poisson geometry and Lie algebroids, the Koszul bracket is a manifestation of the Lie algebroid structure arising on $T^*M$ from given a Poisson structure on $M$ (and, conversely, completely determining this Poisson structure). Namely, it is the Lie-Schouten bracket on $\Pi E^*=\Pi TM$ for a Lie algebroid   $E=T^*M$ over $M$.
There is a linear map $a^*: \O(M)\to \Mult(M)$ that maps the Koszul bracket on 
forms to the canonical Schouten bracket on multivectors, 
denoted by $\lsch -,-\rsch$.  
This map essentially ``raises indices'' with the help of the Poisson tensor $P^{ab}$, 
\begin{equation}
    dx^a=P^{ab}x_b^* \, .
\end{equation}
The same map also maps the de Rham differential $d$ to the Lichnerowicz differential $d_P$. So a commutative diagram arises:
\begin{equation}\label{eq.classdiagram}
    \begin{CD} \Mult^k  (M)@>{d_P}>> \Mult^{k+1}(M)\\
                @A{a^*}AA         @AA{a^*}A\\
                \O^k(M)@>{d}>> \O^{k+1}(M) \,,
    \end{CD}
\end{equation}
with vertical arrows also preserving the brackets.

% Recall, that the Koszul bracket is a manifestation of the Lie algebroid structure arising on $T^*M$ from given a Poisson structure on $M$ (and, conversely, completely determining this Poisson structure). Namely, it is the Lie-Schouten bracket on $\Pi E^*=\Pi TM$ for a Lie algebroid   $E=T^*M$ over $M$.

From the viewpoint of Lie algebroids, 
the explanation is as follows.
The anchor of the the Lie algebroid $T^*M$ 
is a fiberwise linear map $E=T^*M \to TM$ and
is a Lie algebroid morphism.
As such it has ``manifestations'', among which there are the following: 
\begin{enumerate}[(1)]
 \item a fiberwise linear map
 \begin{equation}
 a\co \Pi E=\Pi T^*M \to \Pi TM \, , 
 \end{equation}
 which is a \textit{$Q$-morphism}, i.e. 
intertwines 
Lichnerowicz differential $Q=d_P$ 
%on $\Pi T^*M$  
with
exterior differential $Q=d$;
%on $\Pi TM$; 
 \item the dual map (also fiberwise linear)
 \begin{equation}
 a^{\du}\co \Pi T^*M\to \Pi TM 
 \end{equation}
has the same domain and codomain as $a$.  
\end{enumerate}
As it follows from more general results 
for Lie algebroids, the pullback
\begin{equation}
 (a^{\du})^* \co \fun(\Pi TM) \rightarrow \fun(\Pi T^*M)
\end{equation}
is a homomorphism of 
the corresponding ``Lie-Schouten bracket ''
to the canonical Schouten bracket, so in our case from $\O(M)$ to $\Mult(M)$ 
mapping the Koszul bracket to the Schouten bracket. 

In our particular case, $a^{\du}=\pm a$ (because of the property $P^{ab}=P^{ba}(-1)^{(\at+1)(\bt+1)}$).   The sign plays no role, so a single map serves both purposes: intertwines the differentials ($d$ and $d_P$) and maps the bracket to the bracket (Koszul to Schouten). This is a peculiarity of the classical Poisson case, which \emph{no longer holds} in the homotopy case.

\subsection{$\Pinf$-case: higher Koszul brackets and the de Rham to Lichnerowicz 
diagram, where higher Koszul brackets are not mapped into Schouten bracket}
\label{subsec:diagram_for_arbitrary_P}
For $\Pinf$-manifolds, Khudaverdian--Voronov~\cite{tv:higherpoisson} introduced 
``higher Koszul brackets'' on the algebra of differential forms $\O(M)=\fun(\Pi TM)$. 
They make an $\Sinf$-structure, i.e. a sequence of odd symmetric brackets satisfying 
the higher Jacobi identities (so giving an $\Linf$-algebra structure in the symmetric 
version) and also the Leibniz identities with respect to the usual multiplication. 
Higher Koszul brackets were described  by an odd Hamiltonian $H\in \fun(T^*(\Pi TM))$ 
and also interpreted as the ``homotopy Lie-Schouten'' brackets on $\Pi E^*=\Pi TM$ for 
an $\Linf$-algebroid structure on $E=T^*M$ induced by a $\Pinf$-structure on $M$.
%This is analogous to the Lie algebroid structure induced by a Poisson structure in the classical case.

In~\cite{tv:higherpoisson}, they also  found a formula generalizing the above fiberwise map
$a\co \Pi T^*M\to \Pi TM$ to the case of a general $P \in 
C^\infty(\Pi T^* M)$:
\begin{equation} \label{eq:KHV_map}
dx^a=(-1)^{\at+1}\der{P}{x^*_a} \, .    
\end{equation}
If $P$ is a bi-vector, this simplifies into the classical 
raising of indices formula.
It was shown that the pullback $a^*$ of 
such a non-linear map $a$ still intertwines $d$
and $d_P=\lsch P,-\rsch$. 
This gives the analog of the commutative diagram
of the classical case (not preserving the 
$\ZZ$-grading),
\begin{equation}\label{eq.nonclassdiagram}
    \begin{CD} \Mult  (M)@>{d_P}>> \Mult(M)\\
                @A{a^*}AA         @AA{a^*}A\\
                \O(M)@>{d}>> \O(M) \,.
    \end{CD}
\end{equation}
However, the pullback $a^*\co \O(M)\to \Mult(M)$, 
as a single linear map, can no longer map the entire 
sequence of higher Koszul 
brackets to the Schouten bracket. 
This raises a problem: how to construct 
 $\Linf$-morphism from $\O(M)$ with the higher Koszul 
brackets to $\Mult(M)$ 
with the Schouten bracket, which, in particular, should be a
non-linear map from forms to multivectors.

\subsection{$\Pinf$-case: an $L_\infty$-morphism
mapping higher Koszul brackets into Schouten bracket} 
\label{subsec:Linfty_mapping_HK_into_Schouten}

This problem was the primary motivation behind Voronov's 
development of the concept and apparatus of thick 
morphisms~\cite{tv:nonlinearpullback}.
An $\Linf$-morphism in question was constructed
in~\cite{tv:microformal}, \cite{tv:highkosz}.
Since this will serve as the foundation for our own 
construction, and the explanations in the original 
papers are brief, we provide here additional details.

Firstly, it was observed that formula~\eqref{eq:KHV_map} and 
the corresponding map $a^*$, which intertwines $d_P$ and $d$ 
for a general 
$P \in C^\infty(\Pi T^* M)$, 
can be interpreted as 
a manifestation of the anchor of the 
cotangent $L_\infty$-algebroid. Indeed, the anchor, an 
$L_\infty$-morphism $T^*M 
\rightsquigarrow TM$ (a proof of the fact that the anchor is 
an 
$L_\infty$-morphism, and also details about anchor's 
``manifestations'', can be 
found in~\cite{tv:microformal}), has $a$ as its 
manifestation in the form of a 
fiberwise $Q$-map. Namely, we have a $Q$-morphism
\begin{equation} \label{eq:a}
 a\co \Pi T^*M\to \Pi TM
\end{equation}
the pullback by which
\begin{equation}
 a^*\co C^\infty (\Pi TM) \to C^\infty(T^*M) 
\end{equation}
intertwines the homological vector fields, the manifestations of the
cotangent and tangent algebroids, i.e.
\begin{equation}
      d \circ a^* = a^* \circ d_P        \, .
\end{equation}
Unlike the classical case of  Lie algebroids, the bundle map
$a$ is nonlinear on fibers and does not have a dual in the 
conventional sense. However, its dual does exist as a thick 
morphism, 
\begin{equation}
a^{\du}\co \Pi T^*M\tto \Pi TM \, . 
\end{equation}
To construct $a^{\du}$, we recall that thick morphisms 
generalize usual smooth maps of manifolds, and an
ordinary map $\varphi: M_1 \to M_2$, $y^i=\varphi^i(x)$, is a thick morphism with the
generating  function $S=\varphi(x)^iq_i$
(linear in target momenta). 

To give an explicit local formula, let coordinates in $\Pi 
T^* M$ be denoted by $x^a$ and $x^*_a$. The corresponding 
conjugate momenta are denoted by $p_a$ and $\pi^a$. In $\Pi 
T M$, let the fiber coordinates be $dx^a$, with the 
corresponding momenta denoted
$\pi_a$. Then $a$, defined by formula~\eqref{eq:KHV_map}, 
can be considered as a thick morphism with the generating
 function
\begin{equation} \label{eq:S}
 S = S(x^a,x^*_a;p_a,\pi_a)= x^a p_a
 +(-1)^{\at+1}\der{P}{x^*_a}(x,x^*) \pi_a \, .
\end{equation}
To construct $a^{\du}$ as a thick morphism, one essentially 
applies Mackenzie-Xu (MX) transformation to $S$ to obtain 
$S^*$. However, since $S$ is not a genuine function, we cannot
simply apply the MX formulas directly to~\eqref{eq:S} to 
obtain $S^*$. Instead, we apply MX to the equations 
describing the corresponding thick morphism
as a Lagrangian submanifold in $T^*(\Pi T^* M) \times 
-T^*(\Pi TM)$ and express the result via a new generating function $S^*$.

The MX, for the $T^*(\Pi T^*M)$ part, is the following 
anti-symplectomorphism:
\begin{align}
  \MX \co T^*(\Pi T^* M) &\to T^*(\Pi T M) \notag \\
  (x^a,x^*_a,p_a,\pi^a) & \mapsto \bigl(x^a:=x^a, dx^a:=(-1)^{\widetilde{a}+1}\pi^a, p_a:=-p_a,\pi_a:=x^*_a\bigr)
  \label{eq:MXfromX}
\end{align}

For the $-T^*(\Pi TM)$ part, we use the second copy of $M$ (and everything related).
Our notations for the coordinates on $\Pi TM$ are $y^a$ and
$dy^a$, with the corresponding conjugate momenta   $q_a$ and
$\pi_{a,2}$, and for $\Pi T^*M$, the coordinates are $y^a$ and $y^*_a$, with the conjugate momenta   $q_a$ and $\pi^{a,2}$.
The MX  then will be
\begin{align}
  \MX \co   T^*(\Pi T M) & \to T^*(\Pi T^* M) \notag \\
  (y^a,dy^a,q_a,\pi_{a,2}) & \mapsto \bigl(y^a:=y^a,y^*_a:=\pi_{a,2}, q_a:=-q_a, \pi^{a,2}:=(-1)^{\widetilde{a}+1}dy^a\bigr)
  \label{eq:MXfromH}
\end{align}

In this notation, the formula for the generating function~\eqref{eq:S}
becomes
\begin{equation}
 S = S(x^a,x^*_a;q_a,\pi_{a,2})= x^a q_a
 +(-1)^{\at+1}\der{P}{x^*_a}(x,x^*) \pi_{a,2}\,.
\end{equation}
%$ S = S(x^a,x^*_a;q_a,\pi_{a,2})= x^a q_a +(-1)^{\at+1}\der{P}{x^*_a}(x,x^*) \pi_{a,2}$.
The
corresponding Lagrangian submanifold in $T^*(\Pi T^* M) 
\times -T^*(\Pi TM)$ 
is defined by the following system of equations:
\begin{equation}
\begin{array}{c@{\hspace{2em}}c}
\begin{aligned}
    (-1)^{\widetilde{a}} \der{S}{q_a} &=y^a \\
    (-1)^{\widetilde{a}+1} \der{S}{\pi_{a,2}} &=dy^a 
\end{aligned}
&
\begin{aligned}
    \der{S}{x^a} &=p_a \\
    \der{S}{x_a^*} &=\pi^a
\end{aligned}
\end{array}
\end{equation}
or, after the substitution of the expression for $S$, 
\begin{equation}
\begin{array}{c@{\hspace{2em}}c}
\begin{aligned}
    x^a &=y^a \\
    (-1)^{\widetilde{a}} \der{P}{x_a^*}(x,x^*) &=dy^a 
\end{aligned}
&
\begin{aligned}
    q_a+(-1)^{\widetilde{a}} \frac{\partial^2 P }{\partial 
x^a \partial x_b^*}(x,x^*) \pi_{b,2} &=p_a \\
    (-1)^{\widetilde{a}} \frac{\partial^2 P }{\partial x_b^* 
\partial x_a^*}(x,x^*) \pi_{a,2} &=\pi^b \, .
\end{aligned}
\end{array}
\end{equation}
Applying MX to these equations, we obtain equations for a
Lagrangian submanifold in $T^*(\Pi T M) \times -T^*(\Pi 
T^*M)$,
\begin{equation}
\begin{array}{c@{\hspace{2em}}c}
\begin{aligned}
    y^a &=x^a \\
    (-1)^{\widetilde{a}} \der{P}{x_a^*}(x,\pi_a) 
&=(-1)^{\widetilde{a}+1} \pi^{a,2}
\end{aligned}
&
\begin{aligned}
    -q_a+(-1)^{\widetilde{a}} \frac{\partial^2 P }{\partial 
x^a \partial x_b^*}(x,\pi_c) y_b^* &=-p_a \\
    (-1)^{\widetilde{a}} \frac{\partial^2 P }{\partial x_b^* 
\partial x_a^*}(x,\pi_c) y_a^* &=(-1)^{\widetilde{b}+1}dx^b 
\, .
\end{aligned}
\end{array}
\end{equation}
We can derive a formula for $S^*=S^*(y,y^*,p_a,\pi_a)$, by 
guessing that these equations should be (not necessarily in 
the same order as given) the following general equations 
from the definition of a Lagrangian submanifold:
\begin{equation}
\begin{array}{c@{\hspace{2em}}c}
\begin{aligned}
    (-1)^{\widetilde{a}} \der{S^*}{p_a} &=x^a \\
    (-1)^{\widetilde{b}+1} \der{S^*}{\pi_{b}} &=dx^b 
\end{aligned}
&
\begin{aligned}
    \der{S^*}{y^a} &=q_a \\
    \der{S^*}{y_a^*} &=\pi^{a,2} \, .
\end{aligned}
\end{array}
\end{equation}
We obtain that
\begin{equation}
 S^*=S^*(y,y^*,p_a,\pi_a)=y^a p_a + (-1)^{\widetilde{a}}
 \der{P}{x_a^*}(x,\pi_b) y_a^* \, .
\end{equation}

Now, we define the \textit{dual} to $a$,
 \begin{equation}
a^{\du}\co \Pi T^*M \tto \Pi TM 
 \end{equation}
as the thick morphism with the generating   function
$S^*$. 

One can see that MX preserves the Hamilton-Jacobi
equation, and, therefore, since $a$ is an
$S_\infty$ thick morphism (as a $Q$-morphism), then
$a^\du$ is also an $S_\infty$ thick morphism~\cite{tv:microformal}. (In greater detail, the Hamilton-Jacobi equation in question expresses the condition for the   Hamiltonians corresponding to $d$ and $d_P$ to conicide on the Lagrangian submanifold in the product; hence, after applying MX, their images will concide on the image, i.e. the new Lagrangian submanifold corresponding to the dual thick morphism $a^{\du}$.) This implies, by Theorem~\ref{thm:main_thick},
that the pullback by this thick morphism,
\begin{equation}
(a^{\du})^*\co \fun(\Pi TM) \rightsquigarrow \fun(\Pi 
T^*M) 
 \end{equation}
gives the desired $\Linf$-morphism (from higher Koszul 
brackets to the Schouten bracket). 

% The philosophy here is that both the ordinary (nonlinear) 
% map 
%  $a\co \Pi T^*M\to \Pi TM$ and its dual thick morphism 
% $a^{\du}\co \Pi T^*M\tto  \Pi TM$ are manifestations of the 
% same notion:  an $\Linf$-morphism of $\Linf$-algebroids 
% $T^*M\rightsquigarrow TM$, which is the anchor for the 
% cotangent $\Linf$-algebroid of a $\Pinf$-manifold. 

\subsection{``BV operators'' generating higher brackets. The 
problem.  Higher Koszul to Schouten diagram}
\label{subsec:problem_generating_HK_and_what_we_want}

The goal of this paper is to lift the above concepts to the 
``quantum level''. 
By this we mean the following.

\subsubsection{Generating a single bracket by a second-order odd
differential 
operator.}
Odd brackets, like the canonical Schouten bracket (Ex.~\ref{ex:Schouten_bracket}) and 
the 
classical Koszul bracket (Ex.~\ref{ex:Koszul}), can be defined using odd 
differential operators of square zero, which we refer to 
here as \textit{BV operators}.

The corresponding algebraic notion is that of a \textit{Batalin-Vilkovisky 
(BV) algebra}, which is a commutative (super) algebra $A$ equipped
with an odd Poisson bracket and an odd differential operator $\Delta: A \rightarrow 
A$, $\ord \Delta \leq 2$, so that 
\begin{equation} \label{eq:BV}
 \D(ab)=\D(a)b +(-1)^{\tilde a}a\D(b) + [a,b] \, .
\end{equation}
The condition
$\Delta^2=0$ implies the Jacobi identity for the 
bracket. 
These structures were studied by
Lian-Zuckermann~\cite{lian-zuckerman:1993},
Getzler~\cite{getzler:bv1994} and 
Schwarz~\cite{penkava-schwarz:1994}.
See also 
Witten~\cite{witten:antibracket90}, who attracted attention to such an
algebraic structure in the work of Batalin and Vilkovisky on 
quantization of gauge theories~\cite{bv:perv,bv:vtor}. 
Odd 
differential operators that generate an odd bracket via a formula 
 like~\eqref{eq:BV} are therefore referred to as ``Batalin--Vilkovisky operators''.

\begin{example}
The canonical Schouten bracket (see Ex.~\ref{ex:Schouten_bracket})
on $\fun(T^*M)$ can be generated by the divergence 
operator $\Delta=\delta_{\rho}$. 
Recall that the divergence operator is:
\begin{equation*}
     \delta_\rho T=(-1)^{\tilde{a}}\frac{1}{\rho(x)}
      \frac{\partial}{\partial x^a}
      \left(\rho(x)\frac{\partial{T}}{\partial x^*_a}\right) \, ,
\end{equation*}
     defined using a choice of a volume element $\rho$ on $M$. Note, that although 
we are used to thinking about divergence as a first order operator, from the 
viewpoint of supermanifolds it is a differential operator of second order. 
 See e.g. Kirillov's survey~\cite{kirillov:invariant}
 and in the super language $T=T(x,x^*)$ see e.g.~\cite[Ch.5]{tv:git}.
\end{example}

\begin{example} The Koszul bracket  (see Ex.~\ref{ex:Koszul}) on 
$\fun(\Pi TM)$  for a Poisson manifold $M$ can be generated by   
% (here are formulas for the symmetric version), 
an odd second-order operator
\begin{equation*}
\Delta=\partial_P=[d, i(P)] \, , 
\end{equation*}
where $i(P)$ is the operator of the interior product  with the bivector $P$:
\begin{equation*}
i(P)=\frac{1}{2} P^{ab} 
\frac{\partial}{\partial dx^b} \frac{\partial}{\partial 
dx^a} \, , 
\end{equation*}
see Koszul~\cite{koszul:crochet85}.
% We have: 
% \begin{equation}\label{eq.binkoszbv}
%       [\omega,\xi]_P  =
%       (-1)^{\tilde{
% \omega}} 
% \left[\left[ \partial_P ,\omega \right], \xi \right](1)    
%      \, .
%  \end{equation}
\end{example}

Khudaverdian~\cite{hov:deltabest} and 
Schwarz~\cite{schwarz:bv} independently proposed a formula for a BV operator 
generating a given odd bracket. 
The understanding that this formula can be applied in a more general setting is 
due to Kosmann-Schwarzbach and Monterde~\cite{yvette:divergence}, and in particular, 
it was shown in~\cite{yvette:divergence} that Koszul's operator 
$\partial_\rho$ can be obtained by that formula. 

% 
% .
% Kosmann-Schwarzbach~\cite{yvette:exact}
% has shown that 
% The connection between 
% this structure and Koszul's work was  highlighted by 
% Kosmann-Schwarzbach~\cite{yvette:exact}.

% 
% 
% is a BV operator in the narrow sense, i.e. 
% it 
% is the odd Laplacian constructed from a bracket and a volume element, where the 
% bracket is the Koszul bracket and the volume element is the canonical invariant 
% volume 
% element on $\Pi TM$.

%  
%  Geometric meaning of these operators
%  as odd analogues of the ordinary Laplacians
% for the first mathematical formulation of the Batalin--Vilkovisky geometry and
%  \cite{tv:laplace1}, 
%  \cite{tv:laplace2}.

A BV operator carries more 
information than the bracket it generates, and is not 
unique.
A construction of a BV operator is a quantization of the 
corresponding odd Hamiltonian, where the odd Hamiltonian is 
the principal symbol of the BV operator. 
% 
% In the case of a single odd Poisson bracket, such as the 
% Schouten bracket or the binary Koszul bracket, a BV operator 
% is an operator of second order. The algebraic mechanism by 
% which such an operator generates a bracket was discovered by 
% Koszul \cite{koszul:crochet85}. Independently, an   
% algebraic structure  consisting of  an odd bracket and an 
% odd second-order operator   generating it  was    identified 
% in the studies by Lian and 
% Zuckerman~\cite{lian-zuckerman:1993}, Penkava and 
% Schwarz~\cite{penkava-schwarz:1994}, and 
% Getzler~\cite{getzler:bv1994}  from the study of the 
% Batalin--Vilkovisky formalism. 

\subsubsection{Generating higher brackets by a 
higher-order
differential 
operator: missing   Leibniz rule.}
For higher-order operators, Koszul's construction yields an 
$\Linf$-algebra structure, as described by 
Kravchenko~\cite{kravchenko:bv}, who also introduced the 
term $BV_{\infty}$ for this situation. See also~Bering~\cite{bering:higher}. However, in this 
case, the Leibniz identities do not hold. So, at the first glance, unlike the case of 
operators of second order and a binary odd bracket, it is not possible top obtain an 
$S_{\infty}$-structure from operators. 

Note   that the description of odd brackets using Hamiltonians and 
Koszul's construction   both fit   into the abstract 
framework of Voronov's higher derived 
brackets~\cite{tv:higherder,tv:higherderarb}.  

% !!!!!!!!!!!!!! Connect the below better with what was already said in Preliminaries!

\subsubsection{Generating higher brackets by a
formal $\hbar$-differential operator.}
As we already mentioned (see~\ref{subsec.hdif} in the preliminaries section), in Voronov's~\cite{tv:higherder},  Planck's
constant $\hbar$ was   introduced  
into modified  Koszul's formulas 
to interpolate between the properties of the 
brackets generated by a higher-order operator and those of 
an $\Sinf$-structure (the brackets generated by a Hamiltonian), i.e. with and without derivation property..
This approach was fully developed in~\cite{shemy:koszul}. 
This theory 
makes it possible to speak about the principal symbol, in a 
new sense,  for 
operators of \emph{infinite order}.
In particular, there is a formal $\hbar$-differential 
operator $\Delta_P$ that generates higher Koszul 
brackets~\cite{tv:highkosz}.
This formula in a slightly different 
form appeared also in A.~Bruce's PhD thesis
and his subsequent work~\cite{bruce:tulczyjew2010}, 
where he proved that the operator generated the higher
Koszul brackets.
This proof there was limited to finite sums of homogeneous
multivectors.
In~\cite{shemy:koszul}, we were able to treat the general $P_{\infty}$-case due to the use of
formal
$\hbar$-pseudodifferential operators. %on supermanifolds and vector bundles, where Planck's  constant plays a crucial  role as a true quantization parameter, which made it possible to

Now that we have  $\hbar$-differential operators 
generating brackets (as ``lifts'' of the 
Hamiltonians), we would like to   
similarly  lift  $\Linf$-morphisms. 
Specifically, 
% we want 
% to consider the following problem. 
given the $\Linf$-morphism $(a^{\du})^*$,
described above,
which maps higher Koszul brackets to   Schouten bracket,
%as the pullback by a thick morphism $a^{\du}$, 
we want  \emph{to lift $(a^{\du})^*$ to an operator that 
would intertwine the
operator $\Delta_P$ with 
a BV operator for the Schouten bracket}.
Specifically, we 
will seek an operator for which the known $\Linf$-morphism 
is recovered as the 
classical limit as $\hbar\to 0$. 
(We first mentioned this problem 
in~\cite{shemy:koszul}, referring to the future operator as 
``dual quantum 
anchor''. ``Classical limit'' here is understood in the sense of quantum and classical thick morphisms, see below.)

One may view this   operator that we are looking for, as
a  ``quantum anchor''   for the cotangent 
$\Linf$-algebroid. (More precisely, the quantum anchor in one of the manifestations, ``on $\Pi E^*$'', if we think about an algebroid  $E$.)

Considering commutative diagrams
as discussed above, we can also describe this problem as 
passing from the
``de Rham to Lichnerowicz diagram''~\eqref{eq.nonclassdiagram} to a ``higher Koszul
to Schouten diagram''.

\section{Solution}
\label{sec:solution}
\subsection{The idea}

The basic idea is that since the operator
we are seeking is seen as one of the
manifestations of the ``quantum anchor'' in a
``quantum''  $L_\infty$-algebroid, it can be easier to first construct it in another manifestation and then
derive the one we need using the ``quantum
Mackenzie-Xu transformation''.

We will start with the $\Linf$-structure for 
$T^*M$. It has the following manifestations: 
\begin{enumerate}[1)]
 \item on $\Pi T^*M$ as the homological vector 
field $d_P$, and
\item on $\Pi TM$ as the $\Sinf$-structure given by the 
higher Koszul brackets,
\end{enumerate}
which are related by the Mackenzie-Xu transformation (up to 
a minus sign).
Correspondingly, there are two 
manifestations of the anchor, as an $L_\infty$-morphism $T^*M
\rightsquigarrow TM$:
\begin{enumerate}[1)]
 \item given by a $Q$-morphism $a\co \Pi T^*M\to \Pi TM$, 
and
 \item given by the thick morphism $a^{\du}\co \Pi
T^*M\tto \Pi TM$ that relates the corresponding 
Hamiltonians.
 \end{enumerate}
 
Our plan is to lift the entire
picture to the level of 
operators, 
where the Mackenzie-Xu 
transformation will be replaced by its quantum analog.
Constructing the quantum analog of 
$a$ will be more straightforward, and the quantum analog
of $a^{\du}$ will be obtained through the
application of the ``quantum Mackenzie-Xu'' transformation.

The resulting operators will take the 
form of a ``quantum pullback'' 
(i.e. the pullback by a 
``quantum thick morphism''~\cite{tv:oscil,tv:microformal}). 
for which the known $\Linf$-morphism is the classical 
limit as $\hbar\to 0$. 
(A quantum pullback is a formal $\hbar$-Fourier operator
of a particular type.) Below we  work out these steps.

\subsection{Quantum manifestations of the cotangent 
$\Linf$-algebroid.} Recall them
from~\cite{shemy:koszul}. From now on, we will use the 
approach referred to as 
 ``non-symmetric'' in~\cite{shemy:koszul}, which uses
operators acting on functions and at some point requires a choice of a volume element $\brh$. We
hope to consider the ``symmetric'' approach, with
operators on half-densities, elsewhere.

Given a $\Pinf$-structure on $M$, defined by 
$P=P(x,x^*) \in \fun(T^*M)$, it induces an $L_\infty$-algebroid
structure on the $T^*M$. 
We first recall how the Hamiltonian $H_P$, which generates the higher Koszul
brackets, is constructed, as this will be relevant. On $\Pi
T^*M$, the $L_\infty$-algebroid structure manifests through the Lichnerowicz differential
$d_P$. Consider its Hamiltonian lift, which is the Hamiltonian $H_{d_P} \in
\fun (T^*(\Pi T^*M))$ generating a degenerate case of an $\Sinf$-structure 
consisting solely of $d_P$  as a  unary bracket, $d_P(X)= \lsch P, X  \rsch \,$, for all $X \in \fun(\Pi T^*M)$ and where the bracket on the right hand
side is the Schouten bracket. Consider the
Hamiltonian $\HSch \in \fun (T^*(\Pi T^*M))$, which generates the Schouten 
bracket. Then $d_P(X)= \lsch \HdP, 
X  \rsch \, = \{ \{ \HSch, P \}, X \}$, where $\{-,-\}$ denotes the canonical Poisson
bracket. This implies $\HdP=\{ \HSch, P \}$. 
Denote by $\sta$ the action of 
 the classical MX-transformation. 
 Noting that 
MX-transformation is an anti-symplectomorphism, we obtain
\begin{equation} \label{eq:HP1}
H_P=-\{\HSch^\sta,P^\sta\} \, . 
\end{equation}

To find $\HSch^\sta$, recall the different manifestations of the Lie algebroid 
$TM$. On the 
$\Pi TM$, it manifests through the de Rham differential $d=dx^a \der{}{x^a}$, whose
Hamiltonian lift is $\Hd =dx^a p_a$. On $\Pi T^*M$, it is manifested through the 
Schouten
bracket. Thus, see~Fig.~\ref{dia:tangent},  the MX-transformation maps $\HSch$ into 
$\Hd$, leading to
\begin{equation}
\HSch^\sta=\Hd \, . 
\end{equation} 

\begin{figure}[ht]
 \begin{center}
\begin{tikzpicture}[thick, font=\small, node distance=1.2cm, 
                    every node/.style={align=center}]
    \node (a) {${\Pi TM:}$};  % Top-left node
    \node (b) [right=of a] {$d=dx^a \der{}{x^a}$};  % Top-middle node
    \node (c) [right=of b] {};  % Top-right arrow
    \node (d) [right=of c] {$\Hd =dx^a p_a$};  % 
    
    \node (2d)[below=0.7cm of d] {$\HSch=(-1)^{\at+1}\pi^a p_a$};
    \node (2b) [below=0.7cm of b] {\text{Schouten bracket}};
    \node (2a) [below=0.7cm of a] {${\Pi T^*M:}$};  
%     % Arrows connecting the nodes
     \draw[->] (b) -- (d); 
     \draw[->] (2b) -- (2d); 
     \draw[->] (2d) -- (d) node[midway, right] {MX}; % MX 
\end{tikzpicture}
\end{center}
\caption{The Lie algebroid $TM$ described by Hamiltonians.}
\label{dia:tangent}
\end{figure}

Similarly for $P^\sta$ for $P(x,x^*)$, the MX formulas~\eqref{eq:MXfromX}  give
simply    $P^{\sta}=P(x,\pi)$.
Thus,~\eqref{eq:HP1} implies
\begin{equation} \label{eq:H_P}
H_P= -\{ \Hd 
, P(x,\pi) \} \, .
\end{equation}
This process is illustrated in~Fig.~\ref{dia:cotangent}.
\begin{figure}[ht]
 \begin{center}
\begin{tikzpicture}[thick, font=\small, node distance=2cm, 
                    every node/.style={align=center}]
    \node (a) {${\Pi T^*M:}$};  % Top-left node
    \node (b) [right=of a] {$d_P$};  % Top-middle node
    \node (c) [right=of b] {};  % Top-right arrow
    \node (d) [right=of c] {$\HdP = \{\HSch, P\}$};  % 
    
    \node (2d)[below=0.7cm of d] {$H_P =-\{ \Hd 
, P(x,\pi) \}$}; 
    \node (2b) [below=0.7cm of b] {\text{higher Koszul brackets} ($S_\infty$)};
    \node (2a) [below=0.7cm of a] {${\Pi TM:}$};  
%     % Arrows connecting the nodes
     \draw[->] (b) -- (d); % long arrow -> H_{d_P}
     \draw[->] (d) -- (2d) node[midway, right] {MX}; % MX 
     \draw[->] (2b) -- (2d);
\end{tikzpicture}
\end{center}
\caption{The $L_\infty$-algebroid $T^*M$ described by Hamiltonians.}
\label{dia:cotangent}
\end{figure}

Both of the structures in Fig.~\ref{dia:cotangent} can be lifted to (formal) 
$\hbar$-differential operators whose principal symbols are the Hamiltonians
encoding the classical structures and which have
square zero. This is what we refer to as lifting to  
the ``quantum level''.

On $\Pi T^*M$, we have $d_P$,
% the Lichnerowicz differential
%  $d_P=\lsch 
% P,\_ \rsch$, 
which is   a differential operator of 
first 
order. To make it an $\hbar$-differential 
operator, we take simply $-i\hbar\,
d_P$. This is a 
 generating operator of the $Q$-structure on $\Pi T^*M$ 
 regarded as a degenerate case of an $\Sinf$-structure 
 consisting only of $d_P$  as a  unary bracket:  
\begin{equation} \label{eq:D_P_hat}
   d_P(F)=\ (-i\hbar)^{-1}[
-i\hbar\, 
d_P,F](1)   \mod \hbar\,,
\end{equation}
for all $F\in \fun(\Pi T^*M)$. 
Such a generating operator is not unique
(a fact we make use of later). One can also introduce a 
free
term and consider
\begin{equation}
   D_{d_P}:=-i\hbar\, d_P   -i\hbar\, F_0\,,
\end{equation} 
where $F_0$ is some odd element in $\fun(\Pi T^*M)[[-i\hbar]]$.  Adding 
this term does not affect the principal symbol. The 
condition $(D_{d_P})^2=0$ is equivalent to 
\begin{equation}
d_P(F_0)=0 \, . 
\end{equation}
For instance, this condition is automatically 
satisfied for $F_0=d_P(F)$ for an even $F$.

On $\Pi TM$, we have an $S_\infty$-structure, namely, the
higher Koszul brackets. 
We quantize equality~\eqref{eq:H_P} into 
\begin{equation}
\label{eq:DeltaP}
\Delta_P= -\frac{i}{\hbar}  \Big[ 
-i\hbar d,\hat P \Big] = -[d, 
\hat{P}] \, ,
\end{equation}
where by $\hat P$ we denote a
formal $\hbar$-differential operator, whose principal symbol is the Hamiltonian lift 
of the multivector $P$. This Hamiltonian is $P(x^a, \pi_a)$, and so as $\hat P$ we can take $\hat
P=P(x,-i\hbar \frac{\partial}{\partial 
dx})$.
The higher Koszul brackets can be generated by $\Delta_P$ as the
quantum brackets taken modulo $\hbar$, see Def.~\ref{def:quantumbr}.
The operator 
$\Delta_P$    is not unique; but  it is distinguished 
because its construction as the  ``quantum Poisson
bracket'' exactly follows that of the 
classical Hamiltonian $H_P$. See Fig.~\ref{dia:hcotangent}.

\begin{figure}[ht]
 \begin{center}
\begin{tikzpicture}[thick, font=\small, node distance=1.5cm, 
                    every node/.style={align=center}]
    \node (a) {${\Pi T^*M:}$};  % Top-left node
    \node (b) [right=of a] {$d_P$};  % Top-middle node
    \node (c) [right=of b] {};  % Top-right arrow
    \node (d) [right=of c] {$\HdP = \{\HSch, P\}$};  % 
    \node (e) [right=of d] {$D_{d_P}=-i\hbar d_P-i\hbar F_0$}; 
    \node (2e)[below=0.7cm of e] {$\Delta_P=-[d, \hat{P}]$}; 
    \node (2d)[below=0.7cm of d] {$H_P =-\{ \Hd , P(x,\pi) \}$}; 
    \node (2b) [below=0.7cm of b] {$S_\infty$: \text{higher Koszul br}};
    \node (2a) [below=0.7cm of a] {${\Pi TM:}$};  
%     % Arrows connecting the nodes
     \draw[->
, decorate, decoration={zigzag, segment length=2mm, amplitude=0.4mm}] (d) -- (e) 
node[midway, above] {$\hbar$};
     \draw[->, decorate, decoration={zigzag, segment length=2mm, amplitude=0.4mm}] 
(2d) -- (2e) node[midway, above] {$\hbar$}; 
     \draw[<->] (d) -- (2d) node[midway, right] {MX}; % MX 
\end{tikzpicture}
\end{center}
\caption{Quantum manifestations of the $L_\infty$-algebroid $T^*M$.}
\label{dia:hcotangent}
\end{figure}

\begin{remark}
Note that in works~\cite{tv:highkosz}
and~\cite{shemy:koszul},
formula~\eqref{eq:DeltaP} appears with the opposite sign. It is one of those 
signs that depend on conventions. Our sign corresponds to the choice of the
MX transformation as an anti-symplectomorphism.

Let us verify our 
formula~\eqref{eq:DeltaP} in the case of 
$P=\frac{1}{2}P^{ab}x_b^*x_a^*$ (note that in Koszul's original notation, the indices of 
$P$ are in the opposite order, which introduces a sign difference). 
The
classical binary Koszul bracket is then generated by 
$\partial_P=-[d,i(P)]$, as shown by Koszul~\cite{koszul:crochet85}.

Then the MX transformation of $P$ is
$P^\sta=\frac{1}{2}P^{ab}\pi_b \pi_a$, and the corresponding $\hbar$-differential 
operator is $\hat P=\frac{1}{2}P^{ab}(-i\hbar \der{}{dx^b})(-i\hbar \der{}{dx^a}) $. 
\begin{equation}
 \Delta_P=-[d,\hat P] =
 -\hbar^2 [d,i(P)] = -\hbar^2 \partial_P \, .
\end{equation}
Then using quantum $k$-brackets formula,~\eqref{eq.clbrack}, for $k=2$, we have 
\begin{equation}
(-i \hbar )^{-2} [[\Delta_P,\o_1],\o_2](1)=  [[\partial_P,\o_1],\o_2](1) \, .        
\end{equation}

\end{remark}

\subsection{Quantum Mackenzie-Xu.}

We want the quantum descriptions in both manifestations to remain connected by the MX 
transformation, now quantum, exactly as their classical prototypes.

Introduced in~\cite{shemy:koszul}, 
the \textit{quantum Mackenzie-Xu} ($\hbar$-MX), for a vector bundle $E$, is an anti-isomorphism
(meaning  the order of factors   reversed) between the 
algebras of operators   on dual vector bundles $E$ and 
$E^*$, induced by the following pairings.

\begin{definition}%[pairings between functions (or 
% half-densities) on $E$ and $E^*$]
Given a volume element $\brh=\rho(x) Dx$ on the base $M$, 
for functions $f=f(x,u)\in\fun(E)$, 
$g=g(x,u^*)\in\fun(E^*)$, define
  \begin{equation}\label{eq.pairrho}
  \langle f, g \rangle_{\brh} =\int\limits_{E\times_M E^*} 
\!\!\rho(x) Dx\, Du\, Du^*\; e^{-\frac{i}{\hbar}\langle 
u,u^*\rangle} f(x,u) g(x,u^*)\,.
\end{equation}
\end{definition}

\begin{remark}
A similar pairing, but without any additional data such as $\boldsymbol{\rho}$, can be defined
for half-densities:
  \begin{equation}\label{eq.pair}
  \langle \bof, \bg \rangle =\int\limits_{E\times_M E^*}\!\!
  Dx\, Du\, Du^*\; e^{-\frac{i}{\hbar}\langle u,u^*\rangle}
f(x,u) g(x,u^*)\,.
\end{equation}
% where $\bof=f(x,u)(DxDu)^{1/2}\in\Dens_{1/2}(E)$ and
where
$\bof=f(x,u)(DxDu)^{1/2} \in\Dens_{1/2}(E)$
 and
 $\bg=g(x,u^*)(DxDu^*)^{1/2}\in\Dens_{1/2}(E^*)$.% and $\bg=g(x,u^*)(DxDu^*)^{1/2}$.
\end{remark}
%$\bof=f(x,u)(DxDu)^{1/2}\in\Dens_{1/2}(E)$ and $\bg=g(x,u^*)(DxDu^*)^{1/2}\in\Dens_{1/2}(E^*)$, define
%$f=f(x,u)$, $g=g(x,u^*)$, , or

\begin{definition}%[quantum Mackenzie-Xu ($\hbar$-MX), 
\cite{shemy:koszul}]
  The \emph{quantum Mackenzie-Xu ($\hbar$-MX) 
transformation} of an operator $A\co \fun(E_1)\to 
\fun(E_2)$ 
is defined as the adjoint  $A^{\sta}\co \fun(E_2^*)\to 
\fun(E_1^*)$
  relative to pairing~\eqref{eq.pairrho}: 
\begin{equation} \label{eq:adjointlaw}
  \langle A(f), g\rangle_{\brh}=(-1)^{\At \ft} \langle  f ,
A^{\sta}(g)\rangle_{\brh}
  \, 
\end{equation}
for operators on functions (here there is dependence on 
$\brh$). If necessary to emphasize dependence on a choice of $\brh$,
we can use notation $\sta_{\brh}$, i.e. $A^{\sta_{\brh}}$.
\end{definition}

% In a departure from~\cite{shemy:koszul}, we  use the  a 
% different star  symbol  ${}^{\sta}$,  not the  highly 
% overloaded  usual star  ${}^*$  to avoid confusion with 
% other interpretations. 

It can also be defined  for   operators of half-densities, $A\co
\Dens_{1/2}(E_1)\to \Dens_{1/2}(E_2)$
as the adjoint  $A^{\sta}\co
\Dens_{1/2}(E_2^*)\to
\Dens_{1/2}(E_1^*)$
relative to pairing~\eqref{eq.pair}:
\begin{equation}
  \langle A(\bof), \bg\rangle=(-1)^{\At \tilde\bof} \langle
\bof , A^{\sta}(\bg)\rangle \,
\end{equation}
without dependence on any extra data. 

In what follows, we will be mostly using the version for functions~\eqref{eq:adjointlaw}.

Obviously, we have  
\begin{equation*}\label{eq.mxprod}
  (AB)^{\sta}=(-1)^{\At\Bt}B^{\sta}A^{\sta}
\end{equation*}
for all $A$ and $B$, and $(A^{\sta})^{\sta}=A$ under the 
identification $E^{**}=E$.  

It was shown~\cite{shemy:koszul} that the  quantum 
MX  transformation maps formal $\hbar$-differential 
operators to operators of the same type and that the 
principal symbol of $A^{\sta}$ is the classical MX 
transformation of the principal symbol of $A$, hence the 
name.

\begin{lemma}\label{lem:hMXbasics}
   Let $E=\Pi TM$ and $E^*=\Pi T^*M$. Then 
   \begin{align*}
   \left(f(x)\right)^{\sta}&=f(x) \, , \\
   \left(\der{}{x^a}\right)^{\sta}&=-\rho^{-1}\circ\der{}{x^a}\circ \rho \, ; \\
   (dx^a)^{\sta}&=-i\hbar\,(-1)^{\at+1}\der{}{x^*_a} \, ;\\
   \left(-i \hbar \der{}{dx^a}\right)^{\sta}&=x_a^* \, 
.
\end{align*}   
\end{lemma}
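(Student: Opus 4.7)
Each of the four identities will be verified directly from the defining adjoint relation
\begin{equation*}
\langle A f, g\rangle_{\brh} = (-1)^{\widetilde{A}\widetilde{f}}\,\langle f, A^{\sta} g\rangle_{\brh}
\end{equation*}
applied to the explicit pairing~\eqref{eq.pairrho} with Berezin--Fourier kernel $e^{-\frac{i}{\hbar}dx^c x^*_c}$ on $E\times_M E^* = \Pi TM \times_M \Pi T^*M$. The strategy in each case is to move the operator on the left across this kernel, either by ordinary integration by parts on $M$ or by Berezin integration by parts in the odd fibre variables.

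The first identity $(f(x))^{\sta}=f(x)$ is immediate, since multiplication by a function of $x$ commutes with every other factor in the integrand and introduces no sign. For the second, I would apply ordinary integration by parts in $x^a$; because the measure carries an explicit factor $\rho(x)$, pushing $\der{}{x^a}$ off $f$ produces $-\rho(x)^{-1}\der{}{x^a}(\rho(x)\,\cdot\,)$ acting on $g$, which is exactly the claimed conjugation $-\rho^{-1}\circ \der{}{x^a}\circ \rho$.

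For the two remaining identities the key observation is that the exponential kernel intertwines multiplication by an odd fibre variable with Berezin differentiation in the dual odd variable. Computing with the super Leibniz rule for left derivatives,
\begin{equation*}
\der{}{x^*_a}(dx^c x^*_c) = (-1)^{\widetilde{a}+1}\,dx^a, \qquad \der{}{dx^a}(dx^c x^*_c) = x^*_a,
\end{equation*}
one obtains
\begin{equation*}
dx^a\,e^{-\frac{i}{\hbar}dx^c x^*_c} = i\hbar\,(-1)^{\widetilde{a}+1}\,\der{}{x^*_a}e^{-\frac{i}{\hbar}dx^c x^*_c}, \qquad -i\hbar\,\der{}{dx^a}e^{-\frac{i}{\hbar}dx^c x^*_c} = -x^*_a\,e^{-\frac{i}{\hbar}dx^c x^*_c}.
\end{equation*}
Substituting into $\langle dx^a f, g\rangle_{\brh}$ (respectively $\langle -i\hbar\,\der{}{dx^a} f, g\rangle_{\brh}$) and performing Berezin integration by parts in $x^*_a$ (respectively in $dx^a$) transfers the derivative onto $g$, and combining with the adjoint sign $(-1)^{\widetilde{A}\widetilde{f}}$ yields the two stated formulas.

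The main obstacle is the sign bookkeeping. Three independent sources of signs enter each odd computation: the super Leibniz rule for left derivatives, the Berezin integration-by-parts rule $\int D\xi\,(\partial_\xi F)\,G = -(-1)^{\widetilde{\xi}\widetilde{F}}\int D\xi\,F\,\partial_\xi G$, and the adjoint sign $(-1)^{\widetilde{A}\widetilde{f}}$. The genuine content of the calculation consists in checking that these three parity-dependent factors combine precisely to the signs asserted in the lemma; beyond that, I expect no conceptual difficulty.
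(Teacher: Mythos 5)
Your proposal is correct and follows essentially the same route as the paper: writing out the pairing with the kernel $e^{-\frac{i}{\hbar}dx^c x^*_c}$, integrating by parts in $x$ (absorbing $\rho$) for the second identity, and trading multiplication by $dx^a$ (resp.\ the derivative $\der{}{dx^a}$) for a derivative of (resp.\ multiplication coming from) the exponential, followed by Berezin integration by parts. Your intermediate identities $dx^a e^{S}=i\hbar(-1)^{\widetilde{a}+1}\der{}{x^*_a}e^{S}$ and $-i\hbar\der{}{dx^a}e^{S}=-x^*_a e^{S}$ agree with the paper's computation, which carries the remaining sign bookkeeping through explicitly exactly as you outline.
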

\begin{proof}
For $E=\Pi TM$, pairing~\eqref{eq.pair}  becomes 
\begin{equation*}
  \langle \o, F \rangle_{\brh} 
  =\int\limits_{\Pi TM\times_M \Pi T^*M} \!\!\rho(x) Dx\, 
D(dx)\, D(x^*)\; e^{-\frac{i}{\hbar}\,dx^a x^*_a}\; 
\o(x,dx) 
F(x,x^*)\,.
\end{equation*}
Assuming compact support of $\o$ and $F$, and 
integrating by parts, we get
\begin{align*}
  \left\langle \der{\o}{x^b}, F \right\rangle_{\brh} 
&=\int\limits_{\Pi TM\times_M \Pi T^*M} \!\!\rho(x) Dx\, 
D(dx)\, D(x^*)\; e^{-\frac{i}{\hbar}\,dx^a x^*_a}\; 
\der{\o(x,dx)}{x^b} 
F(x,x^*) \\
&=-(-1)^{\widetilde{\o}(\widetilde{b}+1)}\int\limits_{\Pi TM\times_M \Pi T^*M} \!\! 
Dx\, 
D(dx)\, D(x^*)\; e^{-\frac{i}{\hbar}\,dx^a x^*_a}\; 
\o(x,dx)
\der{(\rho(x) F(x,x^*))}{x^b}\\
&= \left\langle \o , 
-\rho^{-1}\circ\der{F(x,x^*)}{x^a}\circ \rho 
\right\rangle_{\brh} \, ,
\end{align*}
and, therefore, 
$(\der{}{x^a})^{\sta}=-\rho^{-1}\circ\der{}{x^a}\circ \rho$.

Furthermore, applying integration by parts in a slightly
different way, we get
\begin{align*}
  \left\langle  dx^b \o, F \right\rangle_{\brh} 
 &=
\int\limits_{\Pi TM\times_M \Pi T^*M} \!\!\rho(x) Dx\, 
D(dx)\, D(x^*)\; e^{-\frac{i}{\hbar}\,dx^a x^*_a}\; 
dx^b \o(x,dx) 
F(x,x^*) \\
&=\int\limits_{\Pi TM\times_M \Pi T^*M} \!\!\rho(x) Dx\, 
D(dx)\, D(x^*)\; \left(- 
\frac{\hbar}{i}\right)(-1)^{\widetilde{b}+1} 
\der{e^{-\frac{i}{\hbar}\,dx^a x^*_a}\;}{x_b^*} 
\o(x,dx)
 F(x,x^*) \\
 &=-(-1)^{(\widetilde{b}+1)\widetilde{\o}} \int\limits_{\Pi TM\times_M \Pi T^*M} 
\!\!\rho(x) Dx\, 
D(dx)\, D(x^*)\; i \hbar (-1)^{\widetilde{b}+1} 
e^{-\frac{i}{\hbar}\,dx^a x^*_a}\; 
\o(x,dx) 
\der{F(x,x^*)}{x_b^*} \\
&= (-1)^{\widetilde{dx^b} \widetilde{\o}} \left\langle \o 
, 
-i \hbar (-1)^{\widetilde{b}+1} 
(-1)^{(\widetilde{b}+1)\widetilde{\o}} 
\,\der{F(x,x^*)}{x^*_b}
 \right\rangle_{\brh} \, ,
\end{align*}
and, therefore,  
$(dx^a)^{\sta}=-i\hbar\,(-1)^{\at+1}\der{}{x^*_a}$.  

Finally,
\begin{align*}
  \left\langle  \der{\o}{dx^b}, F \right\rangle_{\brh} 
 &=
\int\limits_{\Pi TM\times_M \Pi T^*M} \!\!\rho(x) Dx\, 
D(dx)\, D(x^*)\; e^{-\frac{i}{\hbar}\,dx^a x^*_a}\; 
\der{\o}{dx^b}(x,dx) 
F(x,x^*) \\
&=-(-1)^{(\widetilde{b}+1)\widetilde{\o}} \int\limits_{\Pi TM\times_M \Pi T^*M} 
\!\!\rho(x)
Dx\, 
D(dx)\, D(x^*)\;
\o(x,dx) \der{e^{-\frac{i}{\hbar}\,dx^a x^*_a} }{dx^b}
F(x,x^*) \; \\
&=-(-1)^{(\widetilde{b}+1)\widetilde{\o}}\int\limits_{\Pi TM\times_M \Pi T^*M} 
\!\!\rho(x)
Dx\, 
D(dx)\, D(x^*)\;
\o(x,dx) \left(-\frac{i}{\hbar}\right) \, x^*_b e^{-\frac{i}{\hbar}\,dx^a x^*_a}
F(x,x^*) \; ,
\end{align*}
where we also took into account the sign in~\eqref{eq:adjointlaw}. Therefore,
$(-i\hbar \der{}{dx^a})^{\sta}=x^*_a$.
\end{proof}

\begin{lemma} 
\label{lem.mxofd}
  Let $E=\Pi TM$ and $E^*=\Pi T^*M$. Then for the de Rham 
differential $d$,
  \begin{equation}
    d^{\sta}=-i\hbar\, \delta_{\brh} \quad \text{or} \quad 
d^{\sta}=-i\hbar\, \delta\,,
  \end{equation}
  for the variant  based on $\brh$ for operators on 
functions or the variant  without $\brh$ for operators on 
half-densities, respectively. 
  Here $\delta_{\brh}$ and $\delta$ denote the divergence 
operators (or BV Laplacians) acting, respectively, on 
functions and  half-densities on $\P T^*M$. 
\end{lemma}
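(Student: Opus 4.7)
The plan is to reduce the problem to Lemma \ref{lem:hMXbasics} by writing the de Rham differential as a composition of elementary factors and applying the anti-homomorphism property $(AB)^{\sta} = (-1)^{\widetilde{A}\widetilde{B}}B^{\sta}A^{\sta}$ of the $\hbar$-MX transformation. Specifically, I would write $d = dx^a \frac{\partial}{\partial x^a}$ (with summation over $a$, with $dx^a$ of parity $\widetilde{a}+1$ and $\frac{\partial}{\partial x^a}$ of parity $\widetilde{a}$) and then compute each factor via the lemma. No new integration-by-parts arguments are needed beyond those already performed in the previous lemma.

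Concretely, applying the anti-homomorphism property gives
\begin{equation*}
d^{\sta} = (-1)^{(\widetilde{a}+1)\widetilde{a}} \left(\der{}{x^a}\right)^{\sta} (dx^a)^{\sta}.
\end{equation*}
Since $\widetilde{a}^2 = \widetilde{a}$ in $\mathbb{Z}_2$, the sign $(-1)^{(\widetilde{a}+1)\widetilde{a}}$ equals $1$. Substituting the two identities from Lemma \ref{lem:hMXbasics}, namely $(\partial/\partial x^a)^{\sta} = -\rho^{-1}\circ(\partial/\partial x^a)\circ\rho$ and $(dx^a)^{\sta} = -i\hbar(-1)^{\widetilde{a}+1}\partial/\partial x^*_a$, yields
\begin{equation*}
d^{\sta} = -i\hbar\,(-1)^{\widetilde{a}}\,\rho^{-1}\circ \der{}{x^a}\circ \rho \circ \der{}{x^*_a}.
\end{equation*}
Comparing with the stated formula for the divergence operator, $\delta_{\brh} = (-1)^{\widetilde{a}}\rho^{-1}\circ(\partial/\partial x^a)\circ\rho\circ(\partial/\partial x^*_a)$, gives exactly $d^{\sta} = -i\hbar\,\delta_{\brh}$.

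For the half-density version, the same argument goes through verbatim with the pairing \eqref{eq.pair} in place of \eqref{eq.pairrho}: the factor of $\rho$ simply drops out of the integration by parts, producing $(\partial/\partial x^a)^{\sta} = -\partial/\partial x^a$ in the half-density adjoint, while $(dx^a)^{\sta}$ retains the same form since it came from differentiating the exponential rather than an integration by parts involving $\rho$. The result is $d^{\sta} = -i\hbar\,\delta$, where $\delta$ is the canonical divergence on half-densities on $\Pi T^*M$.

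I do not expect any serious obstacle here: the statement is a formal consequence of Lemma \ref{lem:hMXbasics} combined with the anti-homomorphism property, and the only thing to be careful about is sign bookkeeping in the Koszul sign $(-1)^{(\widetilde{a}+1)\widetilde{a}}$ and matching it against the $(-1)^{\widetilde{a}}$ appearing in the definition of $\delta_{\brh}$. The mildly nontrivial verification is that the two versions (with and without $\brh$) are obtained by precisely the same derivation, differing only in whether the weight $\rho$ is present in the measure.
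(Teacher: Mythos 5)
Your proposal is correct and follows essentially the same route as the paper: factor $d=dx^a\,\partial/\partial x^a$, apply the anti-homomorphism property (with the Koszul sign $(-1)^{(\widetilde{a}+1)\widetilde{a}}=1$), and substitute the elementary transforms from Lemma~\ref{lem:hMXbasics}, treating the half-density variant by the same computation with $\rho$ omitted. The sign bookkeeping matches the paper's, so nothing further is needed.
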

\begin{proof} Using previous lemma,
\begin{align*}
 d^{\sta}&=\left(\der{}{x^a}\right)^{\sta}(dx^a)^{\sta}  
    =-\rho^{-1}\circ\der{}{x^a}
         \circ \rho\circ 
(-i\hbar)\,(-1)^{\at+1}\der{}{x^*_a} 
=(-i\hbar)\,(-1)^{\at}\ 
\rho^{-1}\circ\der{}{x^a}\circ \rho\circ \der{}{x^*_a} \, ,
\end{align*}
i.e.
\begin{equation*}
  d^{\sta}=-i\hbar\,\delta_{\brh}=-i\hbar\, 
(-1)^{\at}\frac{1}{\rho(x)}\der{}{x^a}\rho(x)\der{}{x^*_a}\,
.
\end{equation*}
Similarly with the variant without $\brh$\,:
\begin{equation*}
  d^{\sta}=-i\hbar\,\delta =-i\hbar\, 
(-1)^{\at}\der{}{x^a}\,\der{}{x^*_a}\,.
\end{equation*}
\end{proof}

When starting with the usual $d$, the appearance of $i$ and 
$\hbar$ in the formula for $d^{\sta}$ might seem bothersome; 
however, when working with $\hbar$-differential operators, 
everything becomes more logical:
\begin{equation}\label{eq.danddelta}
  (-i\hbar\,d)^{\sta}= (-i\hbar)^{2}\, \delta_{\brh} \quad 
\text{or} \quad  (-i\hbar\,d)^{\sta}= (-i\hbar)^{2}\, 
\delta\,.
\end{equation}
At the left-hand side we have a first-order operator (hence 
one factor of $-i\hbar$), while at the right-hand side we 
have a second-order operator (hence two factors of 
$-i\hbar$).

\begin{remark}
The supermanifold $\Pi TM$ has an invariant Berezin volume 
element $D(x,dx)$, which is preserved by the vector field 
$d\in\Vect(\Pi TM)$; hence it is possible to identify 
densities of any weight on $\Pi TM$ together with the action 
of $d$    by  the Lie derivative just with functions on $\Pi 
TM$, i.e. pseudodifferential forms on $M$,  with the usual 
action of $d$. At the same time, $\frac{1}{2}$-densities on 
$\P T^*M$ are identified with (pseudo)multivector densities 
(which are known as  (pseudo)integral forms) on $M$. 
\end{remark}

\begin{remark}
For ordinary manifolds, formulas~\eqref{eq.danddelta}  
correspond to the classic   ``duality relation'' between   
de Rham differential and   divergence of multivector fields. 
Indeed, the pairings~\eqref{eq.pairrho},\eqref{eq.pair}, 
when specified to $E=\Pi TM$ and $E^*=\Pi T^*M$, are the 
super versions of the  ``Hodge star like'' 
duality~\cite{tv:ivb}.
\end{remark}

\begin{lemma}
  For the operator $\hat P=P(x,-i\hbar\,\der{}{dx})$ on $\fun(\Pi TM)$, we have
  \begin{equation*}
    (\hat P)^{\sta}=P\,,
  \end{equation*}
  i.e. the multiplication operator by $P=P(x,x^*)$ on $\fun(\Pi T^*M)$.
\end{lemma}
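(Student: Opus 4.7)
The plan is to verify the identity on homogeneous monomials of $\hat P$ and extend by linearity (and formal completeness in $\hbar$) to the full formal series. Since the quantum Mackenzie--Xu transformation is an anti-homomorphism of graded algebras of operators, it suffices to compute $\sta$ on a single monomial of the form $f(x)\bigl(-i\hbar\,\der{}{dx^{a_k}}\bigr)\cdots\bigl(-i\hbar\,\der{}{dx^{a_1}}\bigr)$ and then reassemble. Concretely, I would expand
\begin{equation*}
P(x,x^*)=\sum_{k\geq 0}\frac{1}{k!}\,P^{a_1\ldots a_k}(x)\,x^*_{a_k}\cdots x^*_{a_1}\,,
\end{equation*}
consistent with the conventions of~\eqref{eq:bivector} and Sec.~\ref{subsubsec:Sinfty}, and correspondingly
\begin{equation*}
\hat P=\sum_{k\geq 0}\frac{1}{k!}\,P^{a_1\ldots a_k}(x)\left(-i\hbar\,\der{}{dx^{a_k}}\right)\cdots\left(-i\hbar\,\der{}{dx^{a_1}}\right)\,.
\end{equation*}

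Second, I would apply the anti-homomorphism formula $(AB)^{\sta}=(-1)^{\widetilde A\widetilde B}B^{\sta}A^{\sta}$ iteratively, together with the elementary identities $(f(x))^{\sta}=f(x)$ and $\bigl(-i\hbar\,\der{}{dx^a}\bigr)^{\sta}=x^*_a$ from Lemma~\ref{lem:hMXbasics}, to compute $\sta$ of each monomial. The result is a product of the $x^*_{a_j}$'s in \emph{reversed} order premultiplied by $f(x)$, accompanied by signs coming from repeated application of the anti-homomorphism.

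The key observation is that this reversal sign is trivial, by the following parity-matching argument: the operators $-i\hbar\,\der{}{dx^{a_j}}$ each carry parity $\widetilde{a_j}+1$, which is \emph{exactly} the parity of the corresponding coordinate $x^*_{a_j}$. Hence the sign produced by reversing the order of the $k$ operator factors under iterated application of the anti-homomorphism equals the sign produced by reversing the order of the $k$ coordinate factors; passing from the reversed back to the original order of the $x^*_{a_j}$'s therefore introduces the same sign once more, and the two cancel. This yields $x^*_{a_k}\cdots x^*_{a_1}$ in the original order, and summing over $k$ and $a_1,\ldots,a_k$ delivers $(\hat P)^{\sta}=P(x,x^*)$ as a multiplication operator on $\fun(\Pi T^*M)$. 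The main (and essentially only) obstacle is the careful sign bookkeeping in the iterated anti-homomorphism and in the adjointness rule~\eqref{eq:adjointlaw}; the structural reason the computation closes up so cleanly is precisely the coincidence of parities $\widetilde{x^*_a}=\widetilde{-i\hbar\,\der{}{dx^a}}=\widetilde{a}+1$, which is how the MX framework was designed.
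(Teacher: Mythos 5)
Your proposal is correct and follows essentially the same route as the paper: the paper's proof is a one-line substitution, citing $(f(x))^{\sta}=f(x)$ and $\bigl(-i\hbar\,\der{}{dx^a}\bigr)^{\sta}=x^*_a$ from Lemma~\ref{lem:hMXbasics} and concluding $\bigl(P(x,-i\hbar\,\der{}{dx})\bigr)^{\sta}=P(x,x^*)$. Your monomial-by-monomial sign bookkeeping, in particular the observation that the order-reversal sign from the anti-homomorphism cancels against the sign for restoring the order of the $x^*_{a_j}$'s because $\widetilde{-i\hbar\,\der{}{dx^a}}=\widetilde{x^*_a}=\widetilde{a}+1$, is precisely the justification the paper leaves implicit.
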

\begin{proof} 
  Indeed, from Lemma~\ref{lem:hMXbasics} we have 
$(-i\hbar\der{}{dx^a})^{\sta}=x^*_a$, and so
$\left(P(x,-i\hbar\,\der{}{dx})\right)^{\sta}=P(x,x^*)$.
\end{proof}

\begin{theorem}
\label{thm.deltaPsta}
For the operator $\Delta_P$ on $\fun(\Pi TM)$, the quantum MX  transformation maps it to the operator
\begin{equation}\label{eq.lichcorrected}
  (\Delta_P)^{\sta}=-i\hbar\,d_P-i\hbar \delta_{\brh}(P)
\end{equation}
on $\fun(\Pi T^*M)$.
\end{theorem}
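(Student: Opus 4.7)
\medskip

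\noindent\textbf{Proof plan.} The plan is to combine the three ingredients already at hand: the anti-homomorphism property of the quantum Mackenzie--Xu transformation, namely $(AB)^{\sta}=(-1)^{\At\Bt}B^{\sta}A^{\sta}$; the two computations from Lemmas~\ref{lem.mxofd} and the preceding lemma, giving $d^{\sta}=-i\hbar\,\delta_{\brh}$ and $(\hat P)^{\sta}=P$ (acting as multiplication on $\fun(\Pi T^*M)$); and finally the BV property of $\delta_{\brh}$, which says that $\delta_{\brh}$ generates the canonical Schouten bracket, i.e. $\delta_{\brh}(FG)=\delta_{\brh}(F)G+(-1)^{\widetilde F}F\delta_{\brh}(G)+(-1)^{\widetilde F}\lsch F,G\rsch$ with appropriate signs for the symmetric version.

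First I would apply $\sta$ to the definition $\Delta_P=-[d,\hat P]=-d\hat P+\hat P d$. Since $\hat P$ is even and $d$ is odd, the anti-homomorphism rule produces
\begin{equation*}
(\Delta_P)^{\sta} \;=\; -(d\hat P)^{\sta}+(\hat P d)^{\sta} \;=\; -(\hat P)^{\sta}\,d^{\sta}+d^{\sta}\,(\hat P)^{\sta}.
\end{equation*}
Substituting the two identities above, this becomes $-P\cdot(-i\hbar\,\delta_{\brh})+(-i\hbar\,\delta_{\brh})\cdot P$, which equals $i\hbar\,[P,\delta_{\brh}]=-i\hbar\,[\delta_{\brh},P]$ as operators on $\fun(\Pi T^*M)$, with $P$ denoting the multiplication operator.

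The next step is to unpack the commutator $[\delta_{\brh},P]$ using the BV identity. Evaluating on a test function $X\in\fun(\Pi T^*M)$, the Leibniz defect of $\delta_{\brh}$ against the even factor $P$ produces exactly the Schouten bracket term plus the divergence of $P$, giving the operator identity
\begin{equation*}
[\delta_{\brh},P] \;=\; \delta_{\brh}(P) + \lsch P,-\rsch \;=\; \delta_{\brh}(P)+d_P,
\end{equation*}
where the first summand is multiplication by the function $\delta_{\brh}(P)$. Plugging this back, I obtain $(\Delta_P)^{\sta}=-i\hbar\,d_P-i\hbar\,\delta_{\brh}(P)$, as required.

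The main obstacle is bookkeeping rather than conceptual: I must verify that the parities assigned to $d$, $\hat P$, and $P$ produce the signs I claimed in the anti-homomorphism step (so that the commutator survives rather than collapsing into an anticommutator), and that the BV identity for $\delta_{\brh}$ reproduces the symmetric-version Schouten bracket from Example~\ref{ex:Schouten_bracket} with the correct sign for an even argument. A short sanity check is to specialize to the classical Poisson bivector $P=\tfrac12 P^{ab}x^{*}_b x^{*}_a$, where $\Delta_P=-\hbar^{2}\partial_P$ and the formula should reduce to the well-known expression $\partial_P=[\delta_{\brh},P]/\text{(constants)}$; this provides a compatibility test with the conventions fixed earlier in the paper.
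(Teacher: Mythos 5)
Your proposal is correct and follows essentially the same route as the paper: apply the anti-homomorphism property of $\sta$ to $-[d,\hat P]$ to get $[d^{\sta},(\hat P)^{\sta}]=-i\hbar[\delta_{\brh},P]$, then identify the commutator with $d_P+\delta_{\brh}(P)$ via the generating (BV) property of $\delta_{\brh}$. The sign bookkeeping you flag works out exactly as you anticipate, since $\hat P$ and $P$ are even, so no further verification is needed beyond what you wrote.
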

\begin{proof}
  We have $(\Delta_P)^{\sta}= (-[d,\hat P])^{\sta}=[d^{\sta},(\hat 
P)^{\sta}]=-i\hbar\,[\delta_{\brh}, P]$. To calculate this commutator, 
recall that the divergence operator generates the Schouten bracket: for 
$F\in \fun(\Pi T^*M)$, we have
$    \delta_{\brh}(PF)=\delta_{\brh}(P)F + P\delta_{\brh}(F) +  \lsch  P,F\rsch\,$.
  Hence,
  \begin{equation*}
   [\delta_{\brh}, P](F) = \delta_{\brh} (PF) - P \delta_{\brh}(F) =\delta_{\brh}(P)F 
+  \lsch P,F\rsch
  \end{equation*}
 Or: 
$[\delta_{\brh}, P]=d_P+\delta_{\brh}(P)$. 
\end{proof}

With this new information about the action of the $\hbar$-MX on $\Pi T^*M$, 
we can update Fig.~\ref{dia:hcotangent} accordingly. 
\begin{figure}[ht]
 \begin{center}
\begin{tikzpicture}[thick, font=\small, node distance=1.5cm, 
                    every node/.style={align=center}]
    \node (a) {${\Pi T^*M:}$};  % Top-left node
    \node (b) [right=of a] {$d_P$};  % Top-middle node
    \node (c) [right=of b] {};  % Top-right arrow
    \node (d) [right=of c] {$\HdP = \{\HSch, P\}$};  % 
    \node (e) [right=of d] {$D_{d_P}=-i\hbar\,d_P-i\hbar \delta_{\brh}(P)$}; 
    \node (2e)[below=0.7cm of e] {$\Delta_P=-[d, \hat{P}]$}; 
    \node (2d)[below=0.7cm of d] {$H_P =-\{ \Hd , P(x,\pi) \}$}; 
    \node (2b) [below=0.7cm of b] {$S_\infty$: \text{higher Koszul br}};
    \node (2a) [below=0.7cm of a] {${\Pi TM:}$};  
%     % Arrows connecting the nodes
     \draw[->
, decorate, decoration={zigzag, segment length=2mm, amplitude=0.4mm}] (d) -- (e) 
node[midway, above] {$\hbar$};
     \draw[->, decorate, decoration={zigzag, segment length=2mm, amplitude=0.4mm}] 
(2d) -- (2e) node[midway, above] {$\hbar$}; 
     \draw[<->] (d) -- (2d) node[midway, right] {MX}; \draw[<->] (e) -- (2e) 
node[midway, right] {$\hbar$-MX};
\end{tikzpicture}
\end{center}
\caption{Quantum manifestations of the $L_\infty$-algebroid $T^*M$ connected by 
$\hbar$-MX.}
\label{dia:hcotangent_withcorrectionterm}
\end{figure}

 The scalar term $i\hbar\, \delta_{\brh}(P)$ 
% $-i\hbar\, \delta_{\brh}(P)$
in~\eqref{eq.lichcorrected} gives a ``quantum 
correction'' to the Lichnerowicz differential. Obviously, it depends on a choice of a 
volume element $\brh$. A question is, if it may be possible to make this extra term 
zero altogether. This is the question about the \emph{modular class} of the 
$\Pinf$-structure, which we shall consider   in subsection~\ref{subsec.modular}.

\subsection{The commutative diagram and a preliminary  solution.} %\label{subsec.diagram}

Recall the commutative diagram~\eqref{eq.nonclassdiagram} that holds in the $\Pinf$-case.  
 If we decorate the differentials by $-i\hbar$, we will obtain 
%  a commutative diagram  
% for $-i\hbar d_P$ and $D_d=-i\hbar d$, i.e.
the relation 
\begin{equation} \label{eq:int1}
(-i\hbar d_P) \circ a_P^*=a_P^*\circ (-i\hbar d) \, . 
 \end{equation}
By applying quantum MX,
% to~\eqref{eq:int1}, 
we obtain
\begin{equation*}
  (a_P^*)^{\sta}\circ d_P^{\sta} = d^{\sta}\circ (a_P^*)^{\sta}\,.
\end{equation*}
By applying Lemma~\ref{lem.mxofd}, this becomes 
\begin{equation*}
  (a^*)^{\sta}\circ (-i\hbar d_P)^{\sta} =(-\hbar^2 \delta_{\brh})\circ 
(a^*)^{\sta}\, .
\end{equation*}
That would be an intertwining relation we are looking for if we had $(-i\hbar 
d_P)^{\sta}=\Delta_P$. But we know that $(\Delta_P)^{\sta}=-i\hbar d_P-i\hbar 
\delta_{\brh}(P)$ (by Theorem~\ref{thm.deltaPsta}), so $(-i\hbar d_P)^{\sta}\neq 
\Delta_P$ unless $\delta_{\brh}(P)=0$. So, in general, we do not immediately obtain the desired intertwining operator.

But if it so happens that we can choose a volume element $\rho$ on $M$ such that 
$\delta_{\brh}(P)=0$ for our function $P\in\fun(\Pi T^*M)$, which specifies the 
$\Pinf$-structure on $M$, then we have
\begin{equation*}
% \boxed{
  (a^*)^{\sta}\circ \Delta_P =(-\hbar^2 \delta_{\brh})\circ (a^*)^{\sta}\,,
%   }
\end{equation*}
which gives us the exact intertwining relation we are looking for.

It remains to note that the intertwining operator $(a^*)^{\sta}$ is the MX 
transformation of the pullback by the map $a\co \Pi T^*M\to \Pi TM$ (the anchor for 
the cotangent $\Linf$-algebroid). Since the ordinary
pullbacks are a particular case of quantum pullbacks, i.e. can be expressed as 
integral operators of the desired form, and by \cite[Theorem 4.2]{shemy:koszul}, the 
quantum MX transformation of $a^*$ is also has the form of a quantum pullback (the ``quantum 
dual''), so we indeed arrive at a BV-morphism in the sense of~\cite{tv:microformal}, 
between the BV manifolds $(\Pi T^*M, -\hbar^2\delta_{\brh})$ and $(\Pi TM, \Delta_P)$. 
By the results of~\cite{tv:microformal}, it will induce an $\Linf$-morphism from the 
``quantum higher Koszul brackets'' (the ``quantum brackets'' generated by $\Delta_P$) 
to the Schouten bracket. This solves our problem. Now, let us carry
 out these steps to derive an explicit integral formula.

 The following is a version of~\cite[Theorem 4.2]{shemy:koszul}.  
 \begin{theorem} \label{thm:4.2}
 Given an integral operator of the following form:
 $L\co \fun(E_2)\rightarrow 
\fun(E_1)$,
\begin{equation}\label{eq.tuda}
  f_2(x,u_2) \mapsto f_1(x,u_1)= 
  \int D u_2 \Dbar w_2 e^{\frac{i}{\hbar} \left( S(u_1;w_2)-u_2 w_2\right)} 
f_2(x,u_2) \, .
\end{equation}
Here
$u_1^i$ and $u_2^\alpha$ denote the fiber coordinates in $E_1$ and $E_2$, 
respectively, while $w_{1i}$ and $w_{2\alpha}$ denote the fiber coordinates in $E_1^*$ 
and $E_2^* $, respectively. 
 
 Then the dual of this operator is an operator of the same form:   
 $L^{\sta} \co \fun(E_1^*)\rightarrow 
\fun(E_2^*)$,
\begin{equation}
  g_1(x,w_1) \mapsto g_2(x,w_2)= 
  \int D w_1 \Dbar u_1 e^{\frac{i}{\hbar} \left( S^*(w_2;u_1)-u_1 w_1\right)} 
g_1(x,w_1) \, ,
\end{equation}
where $S^*(w_2;u_1)=S(u_1;w_2)$.
Here, for $x=(x^i)$
denoting coordinates in a supermanifold
$M$ of dimension $n|m$, we define
$\Dbar x = (2\pi \hbar)^{-n}(i \hbar)^m 
(-1)^{\frac{m(m+1)}{2}}Dx$.
\end{theorem}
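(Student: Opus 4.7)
The plan is to verify the formula by direct substitution into the defining adjoint relation~\eqref{eq:adjointlaw} for the pairing~\eqref{eq.pairrho}. First, I would plug the integral representation~\eqref{eq.tuda} for $L(f_2)$ into $\langle L(f_2), g_1\rangle_{\brh}$ and apply Fubini so that all five integrations over $x, u_1, w_1, u_2, w_2$ sit inside a single integral. After folding the phase from the pairing into the phase already carried by $L$, the combined integrand becomes
\begin{equation*}
\rho(x)\,f_2(x,u_2)\,g_1(x,w_1)\,
\exp\!\Bigl(\tfrac{i}{\hbar}\bigl(S(u_1;w_2) - u_2 w_2 - u_1 w_1\bigr)\Bigr),
\end{equation*}
integrated against the combined Berezin measure $Dx\,Du_1\,Dw_1\,Du_2\,\Dbar w_2$.

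Next, I would reorganize the phase in the form $-u_2 w_2 + \bigl(S(u_1;w_2) - u_1 w_1\bigr)$. The bracketed piece is exactly the kernel of a would-be integral operator $\fun(E_1^*)\to \fun(E_2^*)$ acting on $g_1$, provided one \emph{defines} $S^{\sta}(w_2;u_1):=S(u_1;w_2)$. The outer factor $e^{-\frac{i}{\hbar}u_2 w_2}$ then matches the pairing~\eqref{eq.pairrho} between $\fun(E_2)$ and $\fun(E_2^*)$, so the whole expression takes the shape $(-1)^{\tilde{L}\tilde{f_2}}\langle f_2, L^{\sta}(g_1)\rangle_{\brh}$ with $L^{\sta}$ of the advertised integral form. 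This structural symmetry of the phase under the formal swap $(u_1,w_1)\leftrightarrow (u_2,w_2)$ is the heart of the argument, and it immediately yields the identity $S^{\sta}(w_2;u_1)=S(u_1;w_2)$.

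The main obstacle will be the careful bookkeeping of measures and signs rather than any conceptual difficulty. Specifically, one must reconcile the factor $Du_1\,\Dbar w_2$ appearing on the $\langle L(f_2),g_1\rangle_{\brh}$ side with the factor $Dw_2\,\Dbar u_1$ that the $\langle f_2,L^{\sta}(g_1)\rangle_{\brh}$ side naturally produces, using the normalization $\Dbar x=(2\pi\hbar)^{-n}(i\hbar)^m(-1)^{m(m+1)/2}Dx$; only if the $(2\pi\hbar)$ and $(i\hbar)$ prefactors balance can one honestly conclude that no extra constant has been absorbed into $S^{\sta}$. In parallel, the Koszul signs generated when commuting $f_2$ and $g_1$ past each other, past the exponential, and past the Berezin measures must combine consistently with the prescribed $(-1)^{\tilde L\tilde f_2}$ on the right of~\eqref{eq:adjointlaw}. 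These are exactly the same integration-by-parts and sign-tracking calculations used in the proof of Lemma~\ref{lem:hMXbasics}; once they are carried out uniformly, reading off the kernel of $L^{\sta}$ from the rearranged integrand completes the proof.
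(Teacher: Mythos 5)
You should first note that the paper does not actually prove this statement: it is imported verbatim as ``a version of \cite[Theorem 4.2]{shemy:koszul}'', so there is no in-paper argument to compare against. Your plan --- substitute the kernel~\eqref{eq.tuda} into $\langle L(f_2),g_1\rangle_{\brh}$, apply Fubini, fold the pairing phase into the operator phase, and recognize the $E_2$-pairing wrapped around a new kernel whose generating function is $S^{\sta}(w_2;u_1)=S(u_1;w_2)$ --- is the natural direct verification of the adjoint law~\eqref{eq:adjointlaw} and is surely the intended proof. The structural observation that the total phase $S(u_1;w_2)-u_2w_2-u_1w_1$ is symmetric under exchanging the roles of $(u_1,w_1)$ and $(u_2,w_2)$ is exactly the heart of the matter.

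The one place where you defer to ``bookkeeping'' is, however, where the only substantive condition hides, and your expectation that the prefactors ``balance'' is not quite right: the constants in $\Dbar w_2$ and $\Dbar u_1$ do not cancel against each other at all. After collecting all measures, the left-hand side of~\eqref{eq:adjointlaw} carries the factor $(2\pi\hbar)^{-n_2}(i\hbar)^{m_2}(-1)^{m_2(m_2+1)/2}$ attached to the fiber superdimension $n_2|m_2$ of $E_2^*$, while the right-hand side carries the analogous factor for the fiber of $E_1$; the pairings~\eqref{eq.pairrho} themselves contain no compensating normalization. Hence the two sides agree, with $S^{\sta}$ exactly equal to $S$ and no absorbed constant, \emph{only if the fibers of $E_1$ and $E_2$ have equal superdimension}. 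A degenerate test case makes this concrete: take $E_1$ a point, $E_2$ a line, $S=c\,w_2$; then $\langle L(f_2),g_1\rangle_{\brh}=f_2(c)\,g_1$ while $\langle f_2,L^{\sta}(g_1)\rangle_{\brh}=2\pi\hbar\, f_2(c)\,g_1$. This hypothesis holds in the paper's application, where $E_1=\Pi T^*M$ and $E_2=\Pi TM$ both have fiber dimension $m|n$, so both $\Dbar x^*$ and $\Dbar y^*$ carry the same constant; but your proof, as written, would either have to add this assumption to the statement or discover it mid-computation. Once it is in place, the remaining work (Fubini for formal oscillatory integrals, the Koszul signs from commuting $f_2$ and $g_1$ past the exponential and the Berezin measures, and matching the prescribed $(-1)^{\widetilde{L}\widetilde{f_2}}$) is indeed routine and of the same kind as in the proof of Lemma~\ref{lem:hMXbasics}.
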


Let 
\begin{align*}
 &E_1= \Pi T^* M \, , \ u_1=x_a^* \, , \
 E_1^*= \Pi T M \, , \ w_1=dx_a \,  , \\
 &E_2= \Pi T M \, , \ u_2=dy^a \, ,
 \
 E_2^*= \Pi T^* M \, , \  w_2=y^*_a \,  .
\end{align*}

Recall that anchor $a$, defined by formula~\eqref{eq:KHV_map}, 
can be considered as a thick morphism with generating 
 function~\eqref{eq:S}, $
 S = S(x^a,x^*_a; q_a,\pi_a)= x^aq_a
 +(-1)^{\at+1}\der{P}{x^*_a}(x,x^*) \pi_a$.
Its pullback of functions $a^*\co  \fun (E_2)=\fun(\Pi TM) \rightarrow 
\fun(E_1)=\fun(\Pi T^*M)$ can 
be re-written in the form of a quantum pullback or an operator of the form~\eqref{eq.tuda}
\begin{equation}
   f_1(x,x^*)= \int D(dy) \,\Dbar y^* e^{\frac{i}{\hbar} \left( 
(-1)^{\widetilde{a}+1}\der{P}{x_a^*}(x,x^*)\, y_a^* -dy^a y_a^* \right)} f_2(x,dy) \, ,
\end{equation}
where $\Dbar y^*=(2\pi \hbar)^{-m}(i \hbar)^n 
(-1)^{\frac{n(n+1)}{2}}Dy^*$

By Thm.~\ref{thm:4.2}, operator $(a^*)^\sta \co \fun(E_1^*)=\fun(\Pi TM) \rightarrow 
\fun(E_2^*) =\fun(\Pi 
T^*M)$ can be written as follows:
\begin{equation}
% \boxed{
  g_2(x,y^*)= \int D(dx) \,\Dbar x^* e^{\frac{i}{\hbar} \left( 
(-1)^{\widetilde{a}+1}\der{P}{x_a^*}(x,x^*)y_a^* -dx^a x_a^* \right) } g_1(x,dx) \, ,
% }
\end{equation}
where $\Dbar x^*=(2\pi \hbar)^{-m}(i \hbar)^n 
(-1)^{\frac{n(n+1)}{2}}Dx^*$.

We summarize our results in the following theorem.
\begin{theorem} Given a $P_\infty$
structure on a supermanifold $M$,
specified by an even $P=P(x,x^*) \in \fun(T^*M)$, $\lsch P,P\rsch=0$.
Consider the $\hbar$-differential operator
\begin{equation}
\Delta_P=-[d,\hat P] \, , 
\end{equation}
where 
$\hat P(x,x^*) = P(x,-i \hbar \der{}{dx})$, which generates higher Koszul brackets
on the algebra of differential forms $\fun(\Pi TM)$. Consider the $\hbar$-modification 
of the divergence operator,  
\begin{equation*}
  -\hbar^2\,\delta_{\brh}=-\hbar^2\, 
(-1)^{\at}\frac{1}{\rho(x)}\der{}{x^a}\rho(x)\der{}{x^*_a}\,
,
\end{equation*}
which generates the Schouten bracket on the algebra of multivector fields 
$\fun(\Pi T^* M)$. Suppose  $\delta_{\brh}(P)=0$. 

Then $\Delta_P$ and $-\hbar^2\,\delta_{\brh}$ are intertwined by $I=(a^*)^{\sta}$, 
$(a^*)^{\sta}\circ \Delta_P =(-\hbar^2 \delta_{\brh})\circ (a^*)^{\sta}$\,, where 
\begin{equation*}
  (a^*)^\sta \co \fun(\Pi TM) \rightarrow \fun(\Pi  T^*M)
\end{equation*}
can be written as follows:
\begin{equation}
% \boxed{
  g_2(x,y^*)= \int D(dx) \,\Dbar x^* e^{\frac{i}{\hbar} \left( 
(-1)^{\widetilde{a}+1}\der{P}{x_a^*}(x,x^*)y_a^* -dx^a x_a^* \right) } g_1(x,dx) \, ,
% }
\end{equation}
where $\Dbar x^*=(2\pi \hbar)^{-m}(i \hbar)^n 
(-1)^{\frac{n(n+1)}{2}}Dx^*$.
\end{theorem}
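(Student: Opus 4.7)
The plan is to leverage the classical intertwining $d_P\circ a^*=a^*\circ d$, which is the content of the commutative diagram~\eqref{eq.nonclassdiagram}, and pass to the quantum level by applying the quantum Mackenzie--Xu transformation. First, I would rescale by $-i\hbar$, rewriting the classical identity as the equality of first-order $\hbar$-differential operators
\begin{equation*}
(-i\hbar\,d_P)\circ a^* \;=\; a^*\circ (-i\hbar\,d).
\end{equation*}
Since both scaled operators are odd and $a^*$ is even, taking the $\sta$-dual introduces no sign from the anti-multiplicativity rule $(AB)^\sta=(-1)^{\At\Bt}B^\sta A^\sta$, and so yields
\begin{equation*}
(a^*)^\sta \circ (-i\hbar\,d_P)^\sta \;=\; (-i\hbar\,d)^\sta\circ (a^*)^\sta.
\end{equation*}

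Next, I would identify both sides of this dual equation using the tools already established. Lemma~\ref{lem.mxofd} gives $(-i\hbar\,d)^\sta=-\hbar^2\,\delta_{\brh}$, producing the desired operator on the right. For the left, the key is the hypothesis $\delta_{\brh}(P)=0$: combined with Theorem~\ref{thm.deltaPsta}, which computes $(\Delta_P)^\sta=-i\hbar\,d_P-i\hbar\,\delta_{\brh}(P)$, it collapses the quantum correction, leaving $(\Delta_P)^\sta=-i\hbar\,d_P$. Applying $\sta$ once more and using the involutivity $(A^\sta)^\sta=A$ gives $(-i\hbar\,d_P)^\sta=\Delta_P$. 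Substituting into the displayed dual equation produces exactly the claimed intertwining $(a^*)^\sta\circ\Delta_P=(-\hbar^2\,\delta_{\brh})\circ(a^*)^\sta$.

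It remains to derive the explicit integral formula for $I=(a^*)^\sta$. Since $a$ is the fiberwise map~\eqref{eq:KHV_map} with generating function $S$ of~\eqref{eq:S} linear in the target momenta, the ordinary pullback $a^*\co \fun(\Pi TM)\to\fun(\Pi T^*M)$ can be written tautologically as a quantum pullback of the form~\eqref{eq.tuda}: the odd Berezin integrations over $dy^a$ and $y_a^*$ reproduce, via the Dirac-delta identity for odd variables, the substitution $dy^a=(-1)^{\at+1}\der{P}{x_a^*}(x,x^*)$ defining $a^*$. Applying Theorem~\ref{thm:4.2} with the identifications $E_1=\Pi T^*M$, $E_2=\Pi TM$, $u_2=dy^a$, $w_2=y^*_a$ then reads off the dual by the symmetry $S^*(w_2;u_1)=S(u_1;w_2)$, yielding the integral formula stated, with $\Dbar x^*$ normalized as in Theorem~\ref{thm:4.2}.

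The main obstacle I expect is not the algebra, which cascades almost mechanically from Lemma~\ref{lem.mxofd} and Theorem~\ref{thm.deltaPsta}, but the careful bookkeeping of Koszul signs and odd Berezin normalizations when expressing the classical pullback $a^*$ as a quantum pullback and then dualizing. In particular, one has to verify that the phase $S$ enters Theorem~\ref{thm:4.2} with the conventions that correctly match the odd-variable sign rules for $dx^a$ and $x^*_a$, so that the final phase reads precisely as $(-1)^{\at+1}\der{P}{x_a^*}(x,x^*)\,y_a^*-dx^a\,x_a^*$ and not with a shifted sign.
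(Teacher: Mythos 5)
Your proposal is correct and follows essentially the same route as the paper: dualize the classical intertwining relation $(-i\hbar d_P)\circ a^* = a^*\circ(-i\hbar d)$ under the quantum Mackenzie--Xu transformation, identify $(-i\hbar d)^{\sta}=-\hbar^2\delta_{\brh}$ via Lemma~\ref{lem.mxofd}, use Theorem~\ref{thm.deltaPsta} together with the hypothesis $\delta_{\brh}(P)=0$ to conclude $(-i\hbar d_P)^{\sta}=\Delta_P$, and then express $(a^*)^{\sta}$ explicitly via Theorem~\ref{thm:4.2}. The only point the paper makes that you leave implicit is the observation that the ordinary pullback $a^*$ is itself a special case of a quantum pullback, which is what licenses the application of Theorem~\ref{thm:4.2}; your remark about the odd Dirac-delta identity covers this adequately.
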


It remains to analyse what will happen if  $\delta_{\brh}(P)=0$ is \emph{not} satisfied. 

\subsection{Conditions for finding $\rho$ such that $\delta_{\rho}(P)=0$ and 
what to do when this is not possible; the modular class of a 
$\Pinf$-structure.} \label{subsec.modular}
We already met the extra 
term $\delta_{\brh}(P)$  in~\cite{shemy:koszul} as a quantum correction necessary for a definition of a BV operator generating the cotangent $\Linf$-\textbf{bi}algebroid in the manifestation on $\Pi T^*M$. It was noted there that it can be seen as   representing the  modular class  of a $\Pinf$-structure on $M$ specified by $P\in\fun(\Pi T^*M)$\,---\,that is, the cohomology class   generalizing   Weinstein's modular class of a Poisson structure~\cite{weinstein:modular} and similar to the modular classes of Lie algebroids~\cite{evens-lu-weinstein:1999} and $Q$-manifolds~\cite{lyakhovich:mosman1}\cite{tv:qman-esi}. 

Indeed, the following proposition holds:

\begin{proposition}
\label{prop.modu}
  The function $\delta_{\brh}(P)\in \fun(\Pi T^*M)$ is annihilated by $d_P$. If a volume element $\brh$ on $M$ is replaced by $\brh'=e^f\brh$, for some $f\in\fun(M)$, then $\delta_{\brh}(P)$ will be replaced by $\delta_{\brh'}(P)=\delta_{\brh}(P)\pm d_P(f)$. 
\end{proposition}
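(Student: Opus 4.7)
The plan is to establish both parts by a direct structural computation, using two facts already available: $\delta_{\brh}$ is a second-order odd BV operator with $\delta_{\brh}^2=0$ that generates the canonical Schouten bracket on $\fun(\Pi T^*M)$, and the defining assumption $\lsch P,P\rsch=0$.

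For the first assertion, I would apply $\delta_{\brh}^2=0$ to the even element $P^2$. Using the identity $\delta_{\brh}(PF)=\delta_{\brh}(P)F+P\delta_{\brh}(F)+\lsch P,F\rsch$ that was recalled in the proof of Theorem~\ref{thm.deltaPsta}, together with $\lsch P,P\rsch=0$, one obtains $\delta_{\brh}(P^2)=2P\delta_{\brh}(P)$. Applying $\delta_{\brh}$ once more, now with $F=\delta_{\brh}(P)$, and using $\delta_{\brh}^2(P)=0$ together with $\delta_{\brh}(P)^2=0$ (since $\delta_{\brh}(P)$ is odd), the equation $\delta_{\brh}^2(P^2)=0$ collapses to $2\lsch P,\delta_{\brh}(P)\rsch=0$. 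Because $d_P=\lsch P,-\rsch$, this is exactly $d_P(\delta_{\brh}(P))=0$.

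For the second assertion, I would run a direct coordinate calculation. Starting from
\begin{equation*}
\delta_{\brh}(T)=(-1)^{\tilde a}\,\frac{1}{\rho}\,\der{}{x^a}\!\left(\rho\,\der{T}{x^*_a}\right),
\end{equation*}
substituting $\rho\mapsto \rho'=e^f\rho$ and expanding via the product rule $\der{}{x^a}(e^f\rho\,\der{T}{x^*_a})=e^f\der{f}{x^a}\rho\,\der{T}{x^*_a}+e^f\der{}{x^a}(\rho\,\der{T}{x^*_a})$ yields
\begin{equation*}
\delta_{\brh'}(T)=\delta_{\brh}(T)+(-1)^{\tilde a}\,\der{f}{x^a}\,\der{T}{x^*_a}.
\end{equation*}
The correction term is then identified with a Schouten bracket: the local formula~\eqref{eq:can_Schouten} specialised to $F=f\in\fun(M)$ (which kills every $\der{f}{x^*_a}$) gives $\lsch f,T\rsch=(-1)^{\tilde a}\der{f}{x^a}\der{T}{x^*_a}$. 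Taking $T=P$ and using the symmetry of the symmetric version of the odd bracket to convert $\lsch f,P\rsch$ into $\pm\lsch P,f\rsch=\pm d_P(f)$ completes the argument.

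The main obstacle I anticipate is not the underlying identities but sign bookkeeping under the symmetric-version conventions used throughout the paper; the direct reading of the correction term off the explicit Schouten-bracket formula is the cleanest way I see to pin the sign down. An invariant alternative — observing that $\delta_{\brh'}-\delta_{\brh}$ must, by general BV principles, be the Hamiltonian vector field $\lsch f,-\rsch$ on the odd Poisson manifold $\Pi T^*M$ — would give part~(2) almost for free, but the coordinate verification seems preferable for fixing the sign and for showing unambiguously that the ambiguity in $\brh$ shifts $\delta_{\brh}(P)$ by a $d_P$-exact form, so that its class in $\ker d_P/\mathrm{im}\,d_P$ is well defined.
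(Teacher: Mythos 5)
Your proposal is correct and follows essentially the same route as the paper: the paper proves the first part by invoking that $\delta_{\brh}$ is a derivation of the Schouten bracket (applied to $\lsch P,P\rsch=0$), which is exactly what your computation of $\delta_{\brh}^2(P^2)=0$ unwinds term by term, and the second part is obtained in the paper, as in your proposal, directly from the explicit coordinate formula for $\delta_{\brh}$ showing $\delta_{\brh'}=\delta_{\brh}\pm\lsch f,-\rsch$. The sign bookkeeping you flag is harmless since the statement only claims $\pm d_P(f)$, and in the symmetric convention with $f$ and $P$ even one in fact gets $\lsch f,P\rsch=\lsch P,f\rsch=d_P(f)$.
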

Therefore $\delta_{\brh}(P)$ defines a cohomology class $\mu_P=[\delta_{\brh}(P)]\in H^{\text{odd}}\bigl(\fun(\Pi T^*M), d_P\bigr)$ that depends only on $P$ and not on a choice of $\brh$. We call it, \emph{the modular class of a $\Pinf$-structure}.   Note that unlike the usual Poisson case, this cohomology is only $\Z$-graded. 

Another difference with the ordinary Poisson case is that vanishing of the class $\mu_P$ for a $\Pinf$-structure $P$ does not guarantee the existence of an  invariant measure  on $M$, i.e.   a volume element $\brh\in \Vol(M)$ such that $\delta_{\brh}(P)=0$. 

(In the ordinary Poisson case, $P$ is quadratic and $\delta_{\brh}(P)$ has degree $1$. It corresponds to a Poisson vector field on $M$, Weinstein's modular vector field;   adding   $d_P(f)=\lsch P,f\rsch$ corresponds to adding to it the Hamiltonian vector field $\{f,-\}=\lsch\lsch P,f\rsch,-\rsch$. So vanishing of this class in Poisson cohomology is equivalent to the possibility of choosing a volume element $\brh'$ on $M$ such that $\delta_{\brh'}(P)$ vanishes. In the $\Pinf$ case, a similar argument would give only a volume element on $\Pi T^*M$, but which does not necessarily arise from a $\brh'$ on $M$.)

\begin{proof}[Proof of Proposition~\ref{prop.modu}]
  We have $\lsch P, P\rsch=0$. The operator $\d_{\brh}$  is a  derivation of the Schouten bracket (as a generating operator). Hence we have  
  \begin{equation*}
    0=\d_{\brh}(\lsch P, P\rsch)= -\lsch \d_{\brh}(P), P\rsch - \lsch P, \d_{\brh}(P)\rsch= -2\lsch P, \d_{\brh}(P)\rsch =-2d_P\bigl(\d_{\brh}(P)\bigr)\,,
  \end{equation*}
 so  $d_P\bigl(\d_{\brh}(P)\bigr) =0$. Also, if $\brh$ is  replaced by $\brh'=e^f\brh$, then $\d_{\brh'}=\d_{\brh}\pm \lsch f,-\rsch$ (easiest to see from the explicit formula for $\d_{\brh}$). Hence, $\d_{\brh'}(P)=\d_{\brh}(P)\pm \lsch f,P\rsch=\d_{\brh}(P)\pm d_P(f)$.
\end{proof}

For better clarfication, we can connect the construction of $\mu_P$ with the general notion of the modular class of a $Q$-manifold (due independently to~\cite{lyakhovich:mosman1} and~\cite{tv:qman-esi}). If $N$ is a $Q$-manifold, the class $\mu_Q\in H^{\text{odd}}(\fun(N),Q)$ is defined as  the class $\mu_P:=[\div_{\hat\brh}(Q)]$ of the divergence of the field $Q$ with respect to some volume element $\hat\brh\in \Vol(N)$; i.e. $\div_{\hat\brh}(Q)=(\hat\brh)^{-1}L_Q(\hat\brh)$, which may be symbolically written as $\div_{\hat\brh}(Q)=L_Q(\ln\hat\brh)$. Then replacing $\hat\brh$ by $\hat\brh'=e^F\hat\brh$ will result in replacing $\div_{\hat\brh}(Q)$ by $\div_{\hat\brh}(Q)+Q(F)$. 

If $M$ is a $\Pinf$-manifold, we can consider a $Q$-manifold $N=\Pi T^*M$ with $Q=d_P$. Then $\hat\brh\in \Vol(\Pi T^*M)$ and a special case is $\hat\brh=\brh^2$ (as follows from transformation of coordinates on $\Pi T^*M$). In this case, 
\begin{equation*}
  \div_{\brh^2}(Q)=2 (\brh)^{-1}L_{\lsch P,-\rsch}(\brh)\,.
\end{equation*}
Recall that the BV operator on functions on an odd Poisson manifold equipped with a volume form is given by $\Delta_{\hat\brh}(F)=\div_{\brh}(X_F)$~\cite{hov:deltabest}, \cite{schwarz:bv}, where $X_F$ is the odd Hamiltonian vector field corresponding to a function $F$. In particular, on $\Pi T^*M$, we have
\begin{equation*}
  \Delta_{\hat\brh}(F)=\div_{\hat\brh}(\lsch F, -\rsch)
\end{equation*}
and in the special case of $\hat\brh=\brh^2$ we have $\Delta_{\brh^2}=2\delta_{\brh}$\footnote{The presence or absence of extra factors such as $2$ or $1/2$ depends on convention.}. Therefore we have $\mu_Q=2\mu_P$ for  $Q=d_P$. 

Thus vanishing of $\mu_P$ is equivalent to the existence of a $d_P$-invariant volume element on $\Pi T^*M$ (which not necessarily comes from  $M$), in particular the possibility of replacing $\delta_{\brh}$ by another BV operator $\frac{1}{2}\Delta_{e^{2F}\brh^2}=\delta_{\brh}+d_P(F)$. Note that the modified operator will be odd, but not in general of any $\ZZ$-degree, unlike the classical divergence operator $\delta_{\brh}$.

%What it does guarantee, is the existence of a volume element on $\hat\brh\in\Vol(\Pi T^*M)$ such that the BV Laplacian constructed from the Schouten bracket

%To see what it would guarantee (also for a proof of Proposition~\ref{lem.modu}), we recall the following properties of the operator $\delta_{\brh}$ (which are true for any second-order BV operator generating a binary odd bracket, see e.g.~\cite{tv:laplace1}):
%\begin{lemma}
%   \begin{equation}\label{eq.deltabra}   \d_{\brh}(\lsch R, S\rsch)= - \lsch \d_{\brh}(R), S\rsch -(-1)^{\tilde R}\lsch R, \d_{\brh}(S)\rsch
 % \end{equation}
 % and 
%  
%  \begin{equation}\label{eq.deltaexp}
%    \d_{\brh}(e^F)=e^F ()
%  \end{equation}
%\end{lemma}

\subsection{Completion of the construction.} \label{subsec.complet}

Suppose $\mu_P=0$. That means that there is some $F\in\fun(\Pi T^*M)$ such that $\delta_{\brh}(P)=d_P(F)$. Consider again the commutative diagram
\begin{equation}\label{eq.diagram3}
    \begin{CD} \Mult  (M)@>{-i\hbar d_P}>> \Mult(M)\\
                @A{a^*}AA         @AA{a^*}A\\
                \O(M)@>{-i\hbar d}>> \O(M) \, 
    \end{CD}
\end{equation}
and recall our previous discussion.
As we know that $(\Delta_P)^{\sta}=i\hbar d_P+i\hbar \delta_{\brh}(P)$, we obtain that $(\Delta_P)^{\sta}=i\hbar d_P+i\hbar d_P(F)$. If we simply replace $d_P$ by $d_P+ d_P(F)$ in the diagram, it will cease to commute. But we can modify the vertical arrows and restore the commutativity.

\begin{lemma}
Set   $a_{\text{quant}}^*:=e^{-F}a^*$.
Then the diagram
\begin{equation}\label{eq.diagram4}
    \begin{CD} \Mult  (M)@>{-i\hbar d_P-i\hbar d_P(F)}>> \Mult(M)\\
                @A{a_{\text{quant}}^*}AA         @AA{a_{\text{quant}}^*}A\\
                \O(M)@>{-i\hbar d}>> \O(M) \, 
    \end{CD}
\end{equation}
(a modification of diagram~\eqref{eq.diagram3})
will be commutative. 
\end{lemma}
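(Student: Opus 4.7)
The plan is to recognize the modification $d_P \rightsquigarrow d_P + m_{d_P(F)}$ (where $m_h$ denotes multiplication by a function $h$) as the conjugation of $d_P$ by the even factor $e^{-F}$, and then to absorb exactly this conjugation into the modified vertical arrow $a_{\text{quant}}^* := e^{-F} a^*$. First I would verify that $F$ is necessarily even: since $\delta_{\brh}$ is odd and $P$ is even, $\delta_{\brh}(P) = d_P(F)$ is odd, and since $d_P$ is odd as well, this forces $\widetilde{F}=0$, hence $e^{-F}$ is even and commutes (without signs) with all other factors.

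The key algebraic identity to establish is
\begin{equation*}
(d_P + m_{d_P(F)}) \circ m_{e^{-F}} = m_{e^{-F}} \circ d_P \, .
\end{equation*}
This is immediate from the Leibniz rule for the odd derivation $d_P = \lsch P, -\rsch$ (symmetric version) applied to $e^{-F} g$, together with the chain-rule relation $d_P(e^{-F}) = -e^{-F} d_P(F)$. The latter follows by expanding $e^{-F} = \sum (-F)^n/n!$ and using $d_P(F^n) = n F^{n-1} d_P(F)$, valid precisely because $F$ is even. Since $e^{-F}$ is even, rearranging the two terms and moving $e^{-F}$ past $d_P(F)$ and $d_P(g)$ introduces no signs.

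Granted the conjugation identity, commutativity of the modified diagram~\eqref{eq.diagram4} reduces to a one-line composition:
\begin{equation*}
(-i\hbar d_P - i\hbar \, d_P(F)) \circ a_{\text{quant}}^* = e^{-F} \circ (-i\hbar d_P) \circ a^* = e^{-F} \circ a^* \circ (-i\hbar d) = a_{\text{quant}}^* \circ (-i\hbar d) \, ,
\end{equation*}
where the first equality uses the conjugation identity with $a_{\text{quant}}^* = e^{-F} a^*$, and the middle equality is simply the commutativity of the unmodified diagram~\eqref{eq.diagram3} (i.e. a rescaling of~\eqref{eq.nonclassdiagram}). The main, and essentially only, obstacle is sign bookkeeping in the symmetric-odd-bracket conventions of Subsec.~\ref{subsec:Poissonbrackets}; once the parity of $F$ is pinned down, no new technical input is required beyond the known commutativity of~\eqref{eq.diagram3} and the formula for $(\Delta_P)^{\sta}$ from Theorem~\ref{thm.deltaPsta}.
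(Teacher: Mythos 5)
Your proposal is correct and takes essentially the same approach as the paper: the paper's displayed computation is the Leibniz-rule expansion of $d_P(e^{-F}a^*(\o))$ with the resulting cancellation, and it explicitly notes as a ``quicker'' alternative exactly your conjugation identity $-i\hbar d_P-i\hbar\, d_P(F)=e^{-F}(-i\hbar d_P)e^{F}$. Your additional remark pinning down that $F$ is even is a worthwhile clarification but does not change the argument.
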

\begin{proof}
  Indeed, for $\o\in \O(M)$, we have 
  \begin{multline*}
    (-i\hbar d_P-i\hbar d_P(F))(e^{-F}a^*(\o))=-i\hbar d_P(e^{-F}a^*(\o)) -i\hbar d_P(F)e^{-F}a^*(\o) =\\
    -i\hbar(-d_P(F))e^{-F}a^*(\o)  -i\hbar e^{-F} d_P(a^*(\o)) -i\hbar d_P(F)e^{-F}a^*(\o)=
     -i\hbar e^{-F} d_P(a^*(\o))= \\
     -i\hbar e^{-F} a^*(d(\o))= 
      e^{-F} a^*(-i\hbar d(\o))\,,
  \end{multline*}
  due to the commutativity of~\eqref{eq.diagram3}. (Or, quicker: the top arrow is $-i\hbar d_P-i\hbar d_P(F)=e^{-F}(-i\hbar d_P)e^{F}$, and then the statement is immediate.)
\end{proof}

Therefore, the sought-after operator intertwining $-\hbar^2\delta_{\brh}$ and $\Delta_P$ will be 
\begin{equation*}
  \bigl(a_{\text{quant}}^*\bigr)^{\sta}= \bigl(e^{-F}a^*\bigr)^{\sta}\,.
\end{equation*}

It remains to write down explicitly both $a_{\text{quant}}^*$ and $\bigl(a_{\text{quant}}^*\bigr)^{\sta}$ as integral operators and see that they are indeed quantum pullbacks that are lifts of the classical pullbacks $a^*$ and $(a^{\du})^*$.

% \bibliographystyle{plain}
% \bibliography{/home/ekaterina/Desktop/work/general.bib}

\begin{thebibliography}{10}

\bibitem{schwarz:aksz}
M.~Alexandrov, M.~Kontsevich, A.~Schwarz, and O.~Zaboronsky.
\newblock The geometry of the master equation and topological quantum field
  theory.
\newblock {\em Intern. J. of Mod. Phys.}, 12(7):1405--1429, 1997.

\bibitem{bv:perv}
I.~A. Batalin and G.~A. Vilkovisky.
\newblock Gauge algebra and quantization.
\newblock {\em Phys. Lett. B}, 102(1):27--31, 1981.

\bibitem{bv:vtor}
I.~A. Batalin and G.~A. Vilkovisky.
\newblock Quantization of gauge theories with linearly dependent generators.
\newblock {\em Phys. Rev. D (3)}, 28(10):2567--2582, 1983.

\bibitem{behrend-peddie-xu:2023}
K.~Behrend, M.~Peddie, and P.~Xu.
\newblock Quantization of ({$-1$})-shifted derived {P}oisson manifolds.
\newblock {\em Comm. Math. Phys.}, 402(3):2301--2338, 2023.

\bibitem{bering:higher}
K.~Bering, P.~H. Damgaard, and J.~Alfaro.
\newblock Algebra of higher antibrackets.
\newblock {\em Nuclear Phys. B}, 478(1-2):459--503, 1996.

\bibitem{bruce:tulczyjew2010}
A.~Bruce.
\newblock Tulczyjew triples and higher {P}oisson/{S}chouten structures on {L}ie
  algebroids.
\newblock {\em Rep. Math. Phys.}, 66(2):251--276, 2010.

\bibitem{cattaneo_felder_p_infty}
A.~Cattaneo and G.~Felder.
\newblock Relative formality theorem and quantisation of coisotropic
  submanifolds.
\newblock {\em Advances in Mathematics}, 208(2):521--548, 2007.

\bibitem{evens-lu-weinstein:1999}
S.~Evens, J.-H. Lu, and A.~Weinstein.
\newblock Transverse measures, the modular class and a cohomology pairing for
  lie algebroids.
\newblock {\em Quart. J. Math. Oxford Ser. (2)}, 50(200):417--436, 1999.

\bibitem{getzler:bv1994}
E.~Getzler.
\newblock Batalin-vilkovisky algebras and two-dimensional topological field
  theories.
\newblock {\em Comm. Math. Phys.}, 159(2):265--285, 1994.

\bibitem{tv:highkosz}
H.~M. Khudaverdian and Th.~Th. Voronov.
\newblock Thick morphisms, higher koszul brackets, and $l_{\infty}$-algebroids.
\newblock \texttt{arXiv:1808.10049}.

\bibitem{tv:higherpoisson}
H.~M. Khudaverdian and Th.~Th. Voronov.
\newblock Higher {Poisson} brackets and differential forms.
\newblock In {\em X{XVII} {W}orkshop on {G}eometrical {M}ethods in {P}hysics},
  volume 1079 of {\em AIP Conf. Proc.}, pages 203--215. Amer. Inst. Phys.,
  Melville, NY, 2008.

\bibitem{hov:deltabest}
O.~M. Khudaverdian.
\newblock Geometry of superspace with even and odd brackets.
\newblock {\em J. Math. Phys.}, 32(7):1934--1937, 1991.

\bibitem{kirillov:invariant}
A.~A. Kirillov.
\newblock Invariant operators over geometric quantities.
\newblock In {\em Current problems in mathematics, {V}ol. 16 ({R}ussian)},
  pages 3--29, 228. Akad. Nauk SSSR, Vsesoyuz. Inst. Nauchn. i Tekhn.
  Informatsii, Moscow, 1980.

\bibitem{kontsevich:lect}
M.~Kontsevich.
\newblock Lectures on deformation theory at {B}erkeley, 1994.
\newblock Available at
  \url{https://www.ihes.fr/~maxim/TEXTS/Lectures\%20on\%20deformation\%20theory,\%
  20notes\%20by\%20Fuchs.pdf}.

\bibitem{Kontsevich_deform_quant}
M.~Kontsevich.
\newblock Deformation quantization of {P}oisson manifolds.
\newblock {\em Lett. Math. Phys.}, 66(3):157--216, 2003.

\bibitem{yvette:derived}
Y.~Kosmann-Schwarzbach.
\newblock From {Poisson} algebras to {Gerstenhaber} algebras.
\newblock {\em Ann. Inst. Fourier, Grenoble}, 46:1243--1274, 1996.

\bibitem{yvette:divergence}
Y.~Kosmann-Schwarzbach and J.~Monterde.
\newblock Divergence operators and odd {Poisson} brackets.
\newblock {\em Ann. Inst. Fourier}, 52:419--456, 2002.

\bibitem{koszul:crochet85}
J.~L. Koszul.
\newblock Crochet de {S}chouten-{N}ijenhuis et cohomologie.
\newblock {\em Ast\'erisque}, (Numero Hors Serie):257--271, 1985.
\newblock The mathematical heritage of \'Elie Cartan (Lyon, 1984).

\bibitem{kravchenko:bv}
O.~Kravchenko.
\newblock Deformations of {Batalin-Vilkovisky} algebras.
\newblock In {\em Poisson geometry (Warsaw, 1998)}, volume~51 of {\em Banach
  Center Publ.}, pages 131--139, Warsaw, 2000. Polish Acad. Sci.

\bibitem{lada_stasheff_L_infinity_algebras}
T.~Lada and J.~Stasheff.
\newblock Introduction to {SH} {L}ie algebras for physicists.
\newblock {\em Internat. J. Theoret. Phys.}, 32(7):1087--1103, 1993.

\bibitem{lian-zuckerman:1993}
B.~H. Lian and G.J. Zuckerman.
\newblock New perspectives on the {BRST}-algebraic structure of string theory.
\newblock {\em Comm. Math. Phys.}, 154(3):613--646, 1993.

\bibitem{lyakhovich:mosman1}
S.~L. Lyakhovich, E.~A. Mosman, and A.~A. Sharapov.
\newblock On characteristic classes of {$Q$}-manifolds.
\newblock {\em Funktsional. Anal. i Prilozhen.}, 42(1):88--91, 2008.

\bibitem{manin:gaugeeng}
Yu.~I. Manin.
\newblock {\em {Gauge field theory and complex geometry}}.
\newblock Springer-Verlag, Berlin, second edition, 1997.
\newblock Translated from the 1984 Russian original by N.~ Koblitz and J.~ R.~
  King, With an appendix by S.~Merkulov.

\bibitem{penkava-schwarz:1994}
M.~Penkava and A.~Schwarz.
\newblock On some algebraic structures arising in string theory.
\newblock In {\em Perspectives in mathematical physics}, volume III of {\em
  Conf. Proc. Lecture Notes Math. Phys.}, pages 219--227, Cambridge, MA, 1994.
  Int. Press.

\bibitem{pridham:outline2018}
J.~P. Pridham.
\newblock An outline of shifted {P}oisson structures and deformation
  quantisation in derived differential geometry.
\newblock 2018.
\newblock \texttt{arXiv:1804.07622}.

\bibitem{pridham:deform2019}
J.~P. Pridham.
\newblock Deformation quantisation for $(-1)$-shifted symplectic structures and
  vanishing cycles.
\newblock {\em Algebr. Geom.}, 6:747--77, 2019.

\bibitem{schwarz:bv}
A.~Schwarz.
\newblock Geometry of {B}atalin-{V}ilkovisky quantization.
\newblock {\em Comm. Math. Phys.}, 155(2):249--260, 1993.

\bibitem{schwarz:semiclassical}
Albert Schwarz.
\newblock Semiclassical approximation in {B}atalin-{V}ilkovisky formalism.
\newblock {\em Comm. Math. Phys.}, 158(2):373--396, 1993.

\bibitem{shemy:koszul}
E.~Shemyakova.
\newblock On a {Batalin-Vilkovisky} operator generating higher {Koszul}
  brackets on differential forms.
\newblock {\em Lett. Math. Phys.}, 111:41, 2021.

\bibitem{Vaintrob97}
A.~Yu. Va\u{\i}ntrob.
\newblock Lie algebroids and homological vector fields.
\newblock {\em Uspekhi Mat. Nauk}, 52(2(314)):161--162, 1997.

\bibitem{tv:microformal}
F.~F. Voronov.
\newblock Microformal geometry and homotopy algebras.
\newblock {\em Proc. Steklov Inst. Math.}, 302(1):88--129, 2018.

\bibitem{tv:git}
T.~Voronov.
\newblock {\em Geometric integration theory on supermanifolds}, volume~9 of
  {\em Soviet Scientific Reviews, Section C: Mathematical Physics Reviews}.
\newblock Harwood Academic Publishers, Chur, 1991.

\bibitem{tv:qman-esi}
Th. Voronov.
\newblock {$Q$-manifolds} and {Mackenzie} theory: an overview.
\newblock ESI (Erwin Schr{\"o}dinger Institute) preprint, 1952, 2007.
  \texttt{arXiv:0709.4232 [math.DG]}.

\bibitem{TVnonlinearpullback}
Th. Voronov.
\newblock ``{N}onlinear pullbacks'' of functions and {$L_\infty$}-morphisms for
  homotopy {P}oisson structures.
\newblock {\em J. Geom. Phys.}, 111:94--110, 2017.

\bibitem{Durham_Graded_Q_THV19}
Th. Voronov.
\newblock Graded geometry, {$Q$}-manifolds, and microformal geometry.
\newblock {\em Fortschr. Phys.}, 67(8-9, Special issue: Proceedings of the
  LMS/EPSRC Durham Symposium on Higher Structures in M-Theory):1910023, 29,
  2019.

\bibitem{tv:tangent_functor}
Th. Voronov.
\newblock Tangent functor on microformal morphisms, and non-linear pullbacks
  for forms and cohomology.
\newblock {\em J. Geom. Phys.}, 197:Paper No. 105105, 23, 2024.

\bibitem{tv:higherder}
Th.~Th. Voronov.
\newblock Higher derived brackets and homotopy algebras.
\newblock {\em J. of Pure and Appl. Algebra}, 202(1--3):133--153, 2005.

\bibitem{ThThQ2012}
Th.~Th. Voronov.
\newblock {$Q$}-manifolds and {M}ackenzie theory.
\newblock {\em Comm. Math. Phys.}, 315(2):279--310, 2012.

\bibitem{tv:oscil}
Th.~Th. Voronov.
\newblock Thick morphisms of supermanifolds and oscillatory integral operators.
\newblock {\em Russian Math. Surveys}, 71(4):784--786, 2016.

\bibitem{tv:nonlinearpullback}
Th.~Th. Voronov.
\newblock ``{{{N}}}onlinear pullbacks" of functions and $l_{\infty}$-morphisms
  for homotopy {{{P}}}oisson structures.
\newblock {\em J. Geom. Phys.}, 111:94--110, 2017.
\newblock \texttt{arXiv:1409.6475 [math.DG]}.

\bibitem{tv:ivb}
Th.~Th. Voronov and A.~V. Zorich.
\newblock Integration on vector bundles.
\newblock {\em Funct. Anal. Appl.}, 22(2):94--103, 1987.

\bibitem{tv:higherderarb}
Theodore~Th. Voronov.
\newblock Higher derived brackets for arbitrary derivations.
\newblock In {\em Travaux mathématiques. {F}asc. {XVI}}, Trav. Math., XVI,
  pages 163--186, Luxembourg, 2005. Univ. Luxemb.

\bibitem{weinstein:modular}
A.~Weinstein.
\newblock The modular automorphism group of a {{{P}}}oisson manifold.
\newblock {\em J. Geom. Phys.}, 23(3-4):379--394, 1997.

\bibitem{witten:antibracket90}
E.~Witten.
\newblock A note on the antibracket formalism.
\newblock {\em Modern Phys. Lett. A}, 5(7):487--494, 1990.

\end{thebibliography}

\end{document}